\documentclass[11pt,a4paper]{article}

\usepackage[utf8]{inputenc}
\usepackage[T1]{fontenc}
\usepackage{amsmath,amssymb,amsfonts,amsthm}
\usepackage{graphicx}
\usepackage{textcomp}
\usepackage{xcolor}
\usepackage{listings}
\usepackage{booktabs}
\usepackage{hyperref}
\usepackage[margin=1in]{geometry}
\usepackage{tikz}
\usepackage{framed}
\usepackage{needspace}
\usetikzlibrary{shapes,arrows,positioning,calc,fit,backgrounds}

\newtheorem{theorem}{Theorem}[section]
\newtheorem{lemma}[theorem]{Lemma}
\newtheorem{corollary}[theorem]{Corollary}
\newtheorem{definition}[theorem]{Definition}
\newtheorem{proposition}[theorem]{Proposition}
\newtheorem{example}[theorem]{Example}
\newtheorem{remark}[theorem]{Remark}
\newtheorem{principle}[theorem]{Principle}

\newcommand{\FC}{\textit{FC}}
\newcommand{\pillar}[1]{\textbf{P#1}}

\usepackage{etoolbox}
\BeforeBeginEnvironment{theorem}{\needspace{5\baselineskip}}
\BeforeBeginEnvironment{lemma}{\needspace{5\baselineskip}}
\BeforeBeginEnvironment{definition}{\needspace{5\baselineskip}}
\BeforeBeginEnvironment{corollary}{\needspace{4\baselineskip}}
\BeforeBeginEnvironment{proposition}{\needspace{4\baselineskip}}
\BeforeBeginEnvironment{principle}{\needspace{5\baselineskip}}
\BeforeBeginEnvironment{remark}{\needspace{4\baselineskip}}
\BeforeBeginEnvironment{example}{\needspace{4\baselineskip}}
\BeforeBeginEnvironment{proof}{\needspace{4\baselineskip}}
\BeforeBeginEnvironment{lstlisting}{\needspace{4\baselineskip}}

\pretocmd{\section}{\needspace{6\baselineskip}}{}{}
\pretocmd{\subsection}{\needspace{5\baselineskip}}{}{}
\pretocmd{\subsubsection}{\needspace{4\baselineskip}}{}{}

\title{\textbf{The Equivalence Theorem: First-Class Relationships for Structurally Complete Database Systems}}

\author{Matthew Alford\\
\textit{Network Service Group}\\
\texttt{malford@nsgdv.com}\\[2pt]
\small ORCID: \href{https://orcid.org/0009-0007-3856-8362}{0009-0007-3856-8362}}

\date{February 2026}

\begin{document}

\maketitle

\begin{abstract}
We prove \textbf{The Equivalence Theorem}: structurally complete knowledge representation requires exactly four mutually entailing capabilities---n-ary relationships with attributes, temporal validity, uncertainty quantification, and causal relationships between relationships---collectively equivalent to treating relationships as first-class objects. Any system implementing one capability necessarily requires all four; any system missing one cannot achieve structural completeness. This result is constructive: we exhibit an \emph{Attributed Temporal Causal Hypergraph} (ATCH) framework satisfying all four conditions simultaneously.

The theorem yields a strict expressiveness hierarchy---$\text{SQL} \subsetneq \text{LPG} \subsetneq \text{TypeDB} \subsetneq \text{ATCH}$---with witness queries that are structurally inexpressible at each lower level. We establish computational complexity bounds showing NP-completeness for general queries but polynomial-time tractability for practical query classes (acyclic patterns, bounded-depth causal chains, windowed temporal queries). As direct corollaries, we derive solutions to classical AI problems: the Frame Problem (persistence by default from temporal validity), conflict resolution (contradictions as unresolved metadata with hidden variable discovery), and common sense reasoning (defaults with causal inhibitors).

A prototype PostgreSQL extension in C validates practical feasibility within the established complexity bounds.

\medskip
{\noindent\raggedright\textbf{Keywords:} knowledge representation, equivalence theorem, hypergraphs, temporal databases, causal inference, first-class relationships, frame problem\par}

\end{abstract}

\tableofcontents
\newpage

\section{Introduction}

Consider these pieces of real-world knowledge:

\begin{quote}
\textbf{Travel:} ``The airline's schedule change for Flight 123 during the March storm \textbf{led to} a missed connection at Atlanta, \textbf{requiring} an overnight hotel stay covered by the airline.''
\end{quote}

\begin{quote}
\textbf{IT Infrastructure:} ``After the overnight Windows update on Acme Corp's domain controller, half the accounting team couldn't print Monday morning---the update \textbf{reset} a group policy that \textbf{conflicted with} the print driver pushed three weeks earlier to fix a different issue.''
\end{quote}

These simple statements embody knowledge that no existing database system can naturally represent (see Figure~\ref{fig:causal} for a visual). Each involves: an n-ary relationship (airline/update, affected systems, context) with attributes; temporal validity intervals; uncertainty (was the storm the real cause? was the GPO conflict the actual mechanism?); and causal chains where relationships affect other relationships.

\subsection{The Central Question}

This paper addresses a foundational question:

\begin{quote}
\emph{What are the necessary and sufficient conditions for a database system to represent arbitrary real-world knowledge without structural loss?}
\end{quote}

We prove that the answer involves exactly four capabilities, which we call the \emph{Four Pillars}:
\begin{enumerate}
    \item \pillar{1}: N-ary relationships with attributes
    \item \pillar{2}: Temporal validity
    \item \pillar{3}: Uncertainty quantification
    \item \pillar{4}: Causal relationships between relationships
\end{enumerate}

\subsection{What ``Complete'' Means}
\label{sec:complete-definition}

We must be precise about what ``complete'' means in this context (see also Remark~\ref{rem:systematic-fc} for the distinction between systematic and ad hoc first-class status).

\begin{definition}[Expressive Completeness]
A knowledge representation system $\mathcal{K}$ is \emph{expressively complete} iff for any fact $f$ that can be stated in natural language without ambiguity, there exists a representation $r \in \mathcal{K}$ such that:
\begin{enumerate}
    \item $r$ preserves the structural relationships in $f$
    \item $r$ preserves the temporal scope of $f$
    \item $r$ preserves the certainty level of $f$
    \item $r$ preserves causal connections in $f$
\end{enumerate}
\end{definition}

\begin{remark}[What Complete Does NOT Mean]
\emph{Complete} does not mean:
\begin{itemize}
    \item \textbf{Omniscient}: The system knows everything
    \item \textbf{Inferentially complete}: The system can derive all logical consequences
    \item \textbf{Computationally efficient}: All queries are fast
\end{itemize}
It means: \emph{any knowledge that exists can be encoded without structural distortion}. Domain-specific reasoning may require additional machinery, but the representation itself loses nothing.
\end{remark}

\begin{example}[Language Analogy]
English is ``complete'' for human communication---you can express any thought. Morse code is ``complete'' for transmitting English---no information lost. Emojis alone are \emph{not} complete---some concepts cannot be expressed.

Similarly, the framework is complete for knowledge representation; SQL is not.
\end{example}

\subsection{Contributions}

This paper makes the following contributions:

\begin{enumerate}
    \item \textbf{Equivalence Theorem:} Proof that the four pillars are mutually entailing and equivalent to first-class relationship status (Section~\ref{sec:equivalence}).

    \item \textbf{Formal framework:} Attributed Temporal Causal Hypergraphs capturing all four pillars (Section~\ref{sec:framework}).

    \item \textbf{Strict expressiveness hierarchy:} Proof that $\text{SQL} \subsetneq \text{LPG} \subsetneq \text{TypeDB} \subsetneq \text{ATCH}$ with witness queries at each separation (Section~\ref{sec:benchmarks}).

    \item \textbf{Classical AI applications:} Solutions to the Frame Problem, conflict resolution, uncertainty propagation, and common sense reasoning as direct consequences of the four pillars (Section~\ref{sec:applications}).

    \item \textbf{Complexity analysis:} Computational complexity bounds with tractable fragments for practical queries (Section~\ref{sec:complexity}).

    \item \textbf{Information-theoretic analysis:} Formal quantification of information loss when projecting to lower-expressiveness models (Section~\ref{sec:information-theory}).

    \item \textbf{Concrete failure analysis:} Demonstration of structural failures in relational databases, labeled property graphs, semantic triple stores, and higher-arity systems (Section~\ref{sec:failures}).

    \item \textbf{Progressive examples:} Four examples showing why each pillar is necessary (Section~\ref{sec:examples}).

    \item \textbf{Prototype implementation:} A PostgreSQL extension in C validating practical feasibility (Section~\ref{sec:implementation}).
\end{enumerate}
\section{Motivating Examples: From Simple to Complete}
\label{sec:examples}

We present four examples of increasing complexity, showing how each pillar becomes necessary.

\subsection{Example 1: Team Meeting (Pillars 1 \& 2)}

\textbf{Knowledge:} ``Alice, Bob, and Carol had a meeting in Room 101 on January 15th.''

\textbf{Attempt in SQL:}
\begin{lstlisting}
-- Binary approach: loses meeting identity
CREATE TABLE attends (person_id INT, meeting_id INT);
CREATE TABLE meetings (id INT, room VARCHAR, date DATE);
\end{lstlisting}

\textbf{Problem:} The meeting has no identity as a relationship. We cannot say ``\emph{that meeting} was productive'' without creating a separate entity.

\needspace{6\baselineskip}
\textbf{In our framework:}
\begin{lstlisting}
hyperedge: {Alice, Bob, Carol, Room101}
  attributes: {type: "meeting", productive: true}
  valid_time: [2024-01-15 09:00, 2024-01-15 10:00]
\end{lstlisting}

The 4-ary relationship is a first-class object with its own attributes and temporal bounds.

\subsection{Example 2: Supply Chain (Pillars 1--3)}

\textbf{Knowledge:} ``Factory X supplies Part Y to Manufacturer Z (85\% reliable), valid until March 2025.''

\textbf{Attempt in SQL:}
\begin{lstlisting}
CREATE TABLE supplies (
  factory_id INT, part_id INT, manufacturer_id INT,
  reliability FLOAT,  -- Where does this attach?
  valid_until DATE
);
\end{lstlisting}

\textbf{Problem:} The \texttt{reliability} is a column, not a meta-property of the relationship's truth status.

\needspace{6\baselineskip}
\textbf{In our framework:}
\begin{lstlisting}
hyperedge: {FactoryX, PartY, ManufacturerZ}
  attributes: {contract_id: "C-2024-001"}
  valid_time: [2024-01-01, 2025-03-31]
  confidence: 0.85  -- Meta-property of truth
\end{lstlisting}

\subsection{Example 3: IT Infrastructure Incident (All 4 Pillars)}

\textbf{Knowledge:} ``After the overnight Windows update on Acme Corp's domain controller, half the accounting team couldn't print Monday morning---the update \textbf{reset} a group policy that \textbf{conflicted with} the print driver pushed three weeks earlier to fix a different issue.''

\needspace{10\baselineskip}
\begin{center}
\small
\begin{tabular}{lp{7cm}}
\toprule
\textbf{Pillar} & \textbf{Why Required} \\
\midrule
\pillar{1} & Each event involves multiple entities (update, DC, GPO, driver, subnet, team) with attributes \\
\pillar{2} & Driver pushed 3 weeks ago $\to$ update overnight Sunday $\to$ failure Monday morning \\
\pillar{3} & Was the GPO conflict the actual cause? Required investigation to determine \\
\pillar{4} & Driver-push \emph{relationship} interacted with update \emph{relationship} to cause failure \emph{relationship} \\
\bottomrule
\end{tabular}
\end{center}

\needspace{8\baselineskip}
\textbf{In our framework:}
\begin{lstlisting}
e1: {TechJones, PrintDriver, 2ndFloorPrinters, TicketT-4021}
  attributes: {change_type: "driver_update", reason: "tray_select_bug"}
  valid_time: [2024-02-26 10:00, infinity)

e2: {WindowsUpdate, DomainController, GroupPolicy}
  attributes: {update_id: "KB5034763", method: "scheduled"}
  valid_time: [2024-03-18 01:00, 2024-03-18 03:00]

e3: {AccountingTeam, PrintFailure, 2ndFloorSubnet}
  attributes: {scope: "partial", affected: "2nd_floor_only"}
  valid_time: [2024-03-18 08:00, 2024-03-18 15:30]
  confidence: 0.78  -- GPO conflict confirmed after investigation

-- Causal links between RELATIONSHIPS
e1 --conflicted_with--> e2 --caused--> e3
\end{lstlisting}

With the framework's implementation, diagnosing this incident reduces to a single function call:
\begin{lstlisting}
SELECT * FROM atch.trace_causal_chain('print_failure_e3', depth := 3);
-- Returns: e1 (driver push, Feb 26) -> e2 (update, Mar 18) -> e3 (failure)
-- with per-link confidence scores
\end{lstlisting}
The system traverses the causal chain automatically, surfacing the three-week-old driver change that no text search or ticket query would connect to Monday's outage.

\subsection{Example 4: Travel Disruption (All 4 Pillars)}

\textbf{Knowledge:} ``The airline's schedule change during the storm \textbf{led to} a missed connection at Atlanta, \textbf{requiring} an overnight hotel stay covered by the airline.''

\needspace{6\baselineskip}
\textbf{In our framework:}
\begin{lstlisting}
e1: {Airline, ScheduleChange, Flight123, MarchStorm}
  attributes: {reason: "weather", advance_notice: "2hrs"}
  valid_time: [2024-03-15 14:00, 2024-03-15 14:00]

e2: {Passenger, MissedConnection, Atlanta}
  attributes: {original_flight: "UA456"}
  valid_time: [2024-03-15 18:00, 2024-03-15 18:00]

e3: {Passenger, HotelStay, AirportHilton}
  attributes: {covered_by: "airline", nights: 1}
  valid_time: [2024-03-15 22:00, 2024-03-16 06:00]

-- Causal links between RELATIONSHIPS
e1 --led_to--> e2 --required--> e3
\end{lstlisting}

\begin{figure}[ht]
\subsection{Visual: Information Loss Under Projection}
\centering
\begin{tikzpicture}[
  entity/.style={circle, draw, minimum size=0.65cm, fill=blue!10, font=\scriptsize, inner sep=1pt},
  arrow/.style={->, thick}
]

\node[font=\bfseries\small] at (-2, 3.5) {Hypergraph};

\node[entity] (t) at (-3.8, 2) {Tech};
\node[entity] (tk) at (-2, 2.4) {Ticket};
\node[entity] (c) at (-0.2, 2) {Client};
\node[entity] (e) at (-4.2, 0.5) {Error};
\node[entity] (d) at (0.2, 0.5) {Device};
\node[entity] (r) at (-3.5, -0.8) {Resol.};
\node[entity] (o) at (-0.5, -0.8) {Outc.};
\node[entity] (du) at (-2, -1.2) {Dur.};

\begin{scope}[on background layer]
\node[draw=green!60!black, thick, rounded corners=12pt, fill=green!12, 
      fit=(t)(tk)(c)(e)(d)(r)(o)(du), inner sep=10pt] (hyper) {};
\end{scope}
\node[font=\footnotesize, green!40!black] at (-2, -2.5) {$e_1$: single resolution event};

\draw[->, very thick, dashed] (1.2, 0.8) -- (3.0, 0.8) node[midway, above, font=\small] {project};

\node[font=\bfseries\small] at (6.5, 3.5) {Binary Graph};

\node[entity] (t2) at (4.7, 2) {Tech};
\node[entity] (tk2) at (6.5, 2.4) {Ticket};
\node[entity] (c2) at (8.3, 2) {Client};
\node[entity] (e2) at (4.3, 0.5) {Error};
\node[entity] (d2) at (8.7, 0.5) {Device};
\node[entity] (r2) at (5.0, -0.8) {Resol.};
\node[entity] (o2) at (8.0, -0.8) {Outc.};
\node[entity] (du2) at (6.5, -1.2) {Dur.};

\draw[gray!60, thin] (t2) -- (tk2);
\draw[gray!60, thin] (t2) -- (c2);
\draw[gray!60, thin] (t2) -- (e2);
\draw[gray!60, thin] (t2) -- (d2);
\draw[gray!60, thin] (t2) -- (r2);
\draw[gray!60, thin] (tk2) -- (c2);
\draw[gray!60, thin] (tk2) -- (e2);
\draw[gray!60, thin] (tk2) -- (d2);
\draw[gray!60, thin] (tk2) -- (r2);
\draw[gray!60, thin] (tk2) -- (o2);
\draw[gray!60, thin] (tk2) -- (du2);
\draw[gray!60, thin] (c2) -- (d2);
\draw[gray!60, thin] (c2) -- (o2);
\draw[gray!60, thin] (e2) -- (d2);
\draw[gray!60, thin] (e2) -- (r2);
\draw[gray!60, thin] (d2) -- (r2);
\draw[gray!60, thin] (d2) -- (o2);
\draw[gray!60, thin] (r2) -- (o2);
\draw[gray!60, thin] (r2) -- (du2);
\draw[gray!60, thin] (o2) -- (du2);
\draw[gray!60, thin] (c2) -- (e2);
\draw[gray!60, thin] (t2) -- (o2);
\draw[gray!60, thin] (t2) -- (du2);
\draw[gray!60, thin] (c2) -- (r2);
\draw[gray!60, thin] (c2) -- (du2);
\draw[gray!60, thin] (e2) -- (o2);
\draw[gray!60, thin] (e2) -- (du2);
\draw[gray!60, thin] (d2) -- (du2);

\node[text width=4.2cm, align=center, font=\footnotesize, red!70!black] at (6.5, -2.8) {28 edges, but was this one event or\\28 independent relationships?\\$\Delta H \geq \log_2 28 \approx 4.8$ bits lost};

\node[minimum size=0pt, inner sep=0pt, outer sep=0pt] at (6.5, -5.0) {};

\end{tikzpicture}

\vspace{0.5cm}

\caption{Information loss when projecting an 8-entity hyperedge to binary edges. The clean single-event structure (left) becomes an ambiguous web of 28 pairwise edges (right) with irrecoverable loss of relational structure.}
\label{fig:projection}
\end{figure}
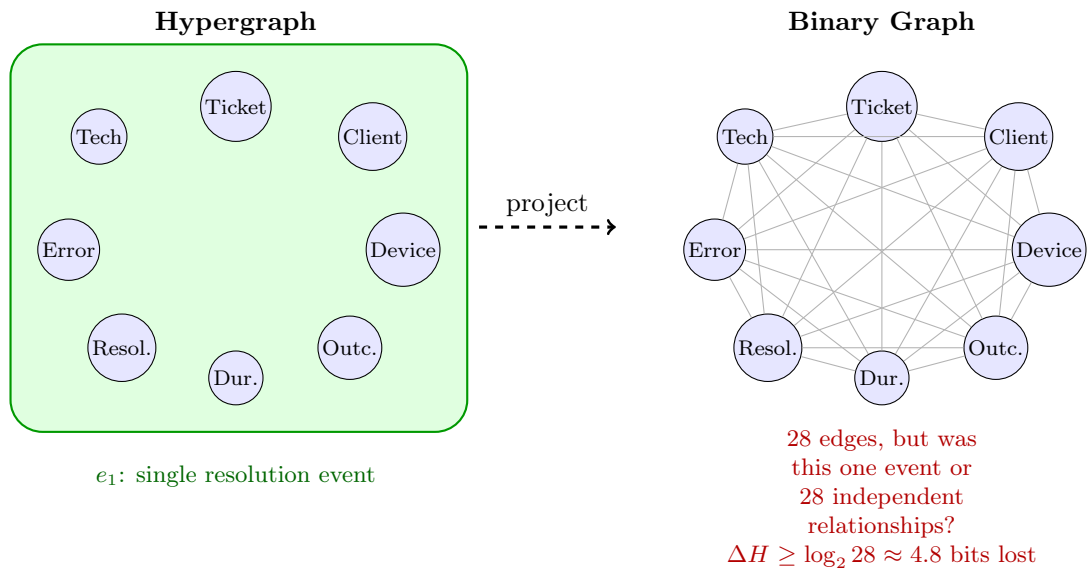

\begin{figure}[ht]
\subsection{Visual: Causal Chain}
\centering
\begin{tikzpicture}[
  rel/.style={rectangle, draw=blue!50!black, rounded corners=6pt, fill=blue!8, 
              minimum width=3.2cm, minimum height=2.2cm, align=center},
  entity/.style={rectangle, draw=gray!50, rounded corners=2pt, fill=white, 
                 font=\tiny, inner sep=2pt, minimum width=1.1cm},
  arrow/.style={->, very thick, >=stealth, blue!60!black},
  label/.style={font=\scriptsize, blue!60!black}
]

\node[rel] (r1) at (0, 0) {};
\node[font=\small\bfseries, blue!60!black] at (0, 0.7) {Driver Push};
\node[entity] at (-0.7, 0) {Tech};
\node[entity] at (0.7, 0) {Driver};
\node[entity] at (-0.7, -0.6) {Printer};
\node[entity] at (0.7, -0.6) {Ticket};
\node[font=\tiny, gray] at (0, -1.5) {$\tau$: [Feb 26] \quad $\kappa$: 1.0};

\draw[arrow] (1.8, 0) -- (3.0, 0) node[midway, above, label] {conflicted};

\node[rel] (r2) at (5.5, 0) {};
\node[font=\small\bfseries, blue!60!black] at (5.5, 0.7) {Windows Update};
\node[entity] at (4.8, 0) {Update};
\node[entity] at (6.2, 0) {DC};
\node[entity] at (5.5, -0.6) {GPO};
\node[font=\tiny, gray] at (5.5, -1.5) {$\tau$: [Mar 18 01:00] \quad $\kappa$: 1.0};

\draw[arrow] (7.3, 0) -- (8.5, 0) node[midway, above, label] {caused};

\node[rel] (r3) at (11, 0) {};
\node[font=\small\bfseries, blue!60!black] at (11, 0.7) {Print Failure};
\node[entity] at (10.3, 0) {Acctg};
\node[entity] at (11.7, 0) {Subnet};
\node[entity] at (11, -0.6) {Printer};
\node[font=\tiny, gray] at (11, -1.5) {$\tau$: [Mar 18 08:00] \quad $\kappa$: 0.78};

\node[font=\scriptsize, text width=10cm, align=center] at (5.5, -2.3) {Each box is a first-class relationship with its own entities, attributes, temporal bounds, and confidence.\\Arrows connect \emph{relationships to relationships}---not entities to entities.};

\end{tikzpicture}
\caption{Causal chain between first-class relationships in an IT infrastructure scenario. A driver push three weeks prior (left) interacted with an overnight Windows update (center) to produce a printing failure (right). Each relationship contains its own participant entities and carries temporal and confidence metadata. Causal links connect relationships, not entities.}
\label{fig:causal}
\end{figure}

\section{Why Existing Systems Fail: Concrete Examples}
\label{sec:failures}

We now demonstrate concrete expressiveness failures in existing systems.  Our comparison targets \emph{classes of architecture}, not individual vendors: relational databases (SQL)~\cite{codd1970relational}, labeled property graphs (Neo4j as representative of Neo4j, Amazon Neptune, TigerGraph, Memgraph, etc.), semantic triple stores (RDF/OWL, including recent extensions such as RDF-star), and higher-arity relation systems (TypeDB).

These systems are not deficient for their designed purposes; they fail only when the representation target requires all four pillars simultaneously.  The failures below are structural---inherent in each architecture's data model---not implementation shortcomings that a future release could patch.

\subsection{SQL (Relational Databases)}

\begin{lstlisting}
CREATE TABLE prescriptions (
  id INT PRIMARY KEY,
  doctor_id INT, drug_id INT, patient_id INT,
  confidence FLOAT, valid_from DATE
);
CREATE TABLE reactions (id INT, patient_id INT, type VARCHAR);

-- How do we say "prescription 42 CAUSED reaction 17"?
CREATE TABLE causes (
  cause_id INT,   -- Points to which table?
  effect_id INT,  -- Points to which table?
  confidence FLOAT
);
\end{lstlisting}

\textbf{Failure:} The \texttt{causes} table cannot have foreign keys to multiple tables. The standard workaround---polymorphic references via \texttt{(entity\_type, entity\_id)} pairs---recovers the link at the cost of type safety, index efficiency, and normalization guarantees, effectively implementing an ad hoc version of first-class relationships without the system's support.  Other workarounds fare no better: (1) union types break normalization, (2) a single \texttt{events} table with discriminator columns flattens heterogeneous relationships into a sparse schema, or (3) per-type link tables (\texttt{prescription\_causes\_reaction}, \texttt{reaction\_causes\_finding}, etc.) grow combinatorially as new relationship types are added.

SQL:2011 application-time period tables address \pillar{2} partially, but temporal predicates apply to tuples, not to relationships as first-class objects.  The SQL/PGQ extension (ISO 9075-16:2023)~\cite{sqlpgq2023} adds property graph query syntax to SQL, but inherits the LPG data model: binary edges only, no hyperedges, no relationships between relationships.  One cannot ask ``what was the causal structure at time $t$?'' because the causal structure is not a native construct in either SQL or SQL/PGQ.  Each workaround reintroduces---in an ad hoc, schema-specific way---precisely the machinery that first-class relationships provide generically.

\subsection{Labeled Property Graphs (Neo4j and similar)}

Neo4j is representative of the labeled property graph (LPG) family, which includes Amazon Neptune (OpenCypher mode), TigerGraph, and Memgraph.  All share the same fundamental data model, formalized by Angles~\cite{angles2018property}: nodes connected by binary, directed edges that carry key-value properties.

The recent GQL standard (ISO/IEC 39075:2024)~\cite{gql2024}---the first new ISO database language since SQL---and the SQL/PGQ extension (ISO 9075-16:2023)~\cite{sqlpgq2023} codify this model as an international standard, with formal pattern-matching semantics analyzed by Deutsch et al.~\cite{deutsch2022graph} and Francis et al.~\cite{francis2023gpc}.  Critically, neither GQL nor SQL/PGQ extends the LPG data model with hyperedges or relationships between relationships; the standardized model remains binary.  This means the structural limitations identified below are not implementation gaps that a future release could close, but architectural constraints embedded in the standardized data model~\cite{angles2023pgschema}.

\begin{lstlisting}[language=]
// Neo4j Cypher - relationships are BINARY only
CREATE (d:Doctor)-[rx:PRESCRIBED]->(drug:Drug)
// Where does the patient go? Must create fake nodes.

// "Relationships between relationships" - IMPOSSIBLE
MATCH (r1)-[:CAUSED]->(r2)  // r1, r2 must be NODES
\end{lstlisting}

\needspace{8\baselineskip}
\textbf{Failure:} Prescription involves 4 entities but LPG edges connect exactly 2 nodes. Modeling the prescription requires reifying the relationship as a node, which loses relationship semantics: the reified node has no arity constraint, cannot enforce participation roles, and is indistinguishable from any other node in the graph.  Meta-relationships (``this prescription caused that reaction'') require a second reification step, compounding the semantic loss.

\subsection{Semantic Triple Stores (RDF/OWL)}

RDF represents knowledge as subject--predicate--object triples, making it inherently binary.  Several extensions address this limitation partially:

\begin{itemize}
    \item \textbf{Classic reification} creates four triples per meta-statement (the statement plus its subject, predicate, and object as separate triples).  This is a workaround, not native support: the reified statement has no enforced connection to the original triple, and queries must reconstruct the relationship manually.
    \item \textbf{Named graphs} allow grouping triples under a graph identifier that can carry metadata, providing a partial mechanism for attaching temporal or provenance information to sets of triples.
    \item \textbf{RDF-star} (W3C Community Group report 2023; formal W3C standardization as part of RDF~1.2 is ongoing) enables embedded triples---a triple can be the subject or object of another triple---which directly addresses meta-relationships.
\end{itemize}

Even with RDF-star, structural limitations remain.  N-ary relationships must be decomposed into multiple triples linked by a blank node or auxiliary resource; the n-ary pattern is a documented workaround, not a native construct.  Temporal validity requires vocabulary conventions (e.g., OWL-Time) without query-language integration---SPARQL has no \texttt{AT TIME} operator.  Confidence scores can be attached via RDF-star annotations but lack propagation semantics.  We therefore rate RDF/OWL as $\circ$ (workaround) for \pillar{1} and \pillar{4}, and $\times$ for \pillar{2} and \pillar{3}.

\subsection{TypeDB: Closest but Incomplete}

TypeDB supports n-ary relations (\pillar{1}) and partial relation-to-relation references (\pillar{4}).  Of the systems examined, TypeDB comes closest to structural completeness, and its explicit support for n-ary relations validates the importance of Pillar~1.

\begin{lstlisting}[language=]
// TypeDB - CAN model structure
define prescription sub relation,
  relates doctor, relates drug, relates patient;

// But CANNOT answer:
match $p isa prescription;
  AT TIME '2024-07-01' -- NOT SUPPORTED (no P2)
  $p has confidence $c;  -- No propagation (no P3)
\end{lstlisting}

\textbf{Failure:} TypeDB satisfies \pillar{1} natively and \pillar{4} partially (relations can participate in other relations, but without causal semantics or propagation rules).  However, it treats temporal validity and uncertainty as user-defined metadata---optional attributes on relations---rather than intrinsic structural constraints with query-language integration and propagation semantics.  A \texttt{valid\_from} attribute in TypeDB is syntactically indistinguishable from any other attribute; the system cannot answer ``what was true at time $t$?'' without application logic to interpret the convention.  Similarly, a \texttt{confidence} attribute is a passive property with no propagation through causal chains.  This is precisely the ``structural loss'' that the Equivalence Theorem identifies: without native support for all four pillars, the system's data model cannot preserve the structural dimensions of knowledge regardless of what metadata conventions users adopt.

\section{Formal Framework}
\label{sec:framework}

\subsection{Attributed Temporal Causal Hypergraph}

\begin{definition}[Attributed Temporal Causal Hypergraph]
\label{def:atch}
An \emph{Attributed Temporal Causal Hypergraph} (ATCH) is a tuple:
$$\mathcal{H} = (V, \mathcal{E}, \mathcal{A}_V, \mathcal{A}_E, \tau_v, \tau_t, \kappa, \prec)$$
where:
\begin{itemize}
    \item $V$: finite set of vertices (entities)
    \item $\mathcal{E} \subseteq 2^{V \cup \mathcal{E}} \setminus \{\emptyset\}$: hyperedges (can include other hyperedges)
    \item $\mathcal{A}_V: V \to \mathcal{P}$: vertex attribute function
    \item $\mathcal{A}_E: \mathcal{E} \to \mathcal{P}$: hyperedge attribute function
    \item $\tau_v: \mathcal{E} \to \mathcal{I}$: valid time assignment
    \item $\tau_t: \mathcal{E} \to \mathcal{I}$: transaction time assignment
    \item $\kappa: \mathcal{E} \to [0,1]$: confidence assignment
    \item $\prec \subseteq \mathcal{E} \times \mathcal{E}$: causal partial order
\end{itemize}
\end{definition}

\subsection{The Four Pillars: Formal Definitions}

\begin{definition}[\pillar{1}: N-ary Relationships with Attributes]
\label{def:p1}
System $\mathcal{K}$ satisfies \pillar{1} iff for any $n \geq 2$ and any attribute set $A$, $\mathcal{K}$ can represent a relationship connecting exactly $n$ entities with attributes $A$.  The entity domain is \emph{closed under relationship formation}: any relationship that $\mathcal{K}$ can represent is itself an entity eligible to participate in other relationships.
\end{definition}

\begin{remark}[Closure Under Relationship Formation]
\label{rem:closure}
The closure requirement in Definition~\ref{def:p1} is not an optional extra; it is the natural consequence of treating relationships \emph{uniformly}.  A system that can form relationships over arbitrary entities but excludes relationships from the entity domain imposes an ad hoc type-level restriction: some objects in the system are first-class participants and others are not.  This is precisely the non-uniformity that Remark~\ref{rem:systematic-fc} identifies as the hallmark of ad hoc (rather than systematic) relationship support.  The closure requirement ensures that the ``any $n$ entities'' quantifier genuinely ranges over all representable objects.
\end{remark}

\begin{definition}[\pillar{2}: Temporal Validity]
System $\mathcal{K}$ satisfies \pillar{2} iff every relationship $r$ has an associated valid time interval $\tau_v(r)$, $\mathcal{K}$ supports time-travel queries, and history is preserved (no destructive updates).
\end{definition}

\begin{definition}[\pillar{3}: Uncertainty Quantification]
\label{def:p3}
System $\mathcal{K}$ satisfies \pillar{3} iff every relationship $r$ can carry a confidence value $\kappa(r) \in [0,1]$, confidence propagates through inference according to well-defined rules, and the evidential basis for each confidence value (the supporting or undermining evidence and assessment methodology) is itself representable and queryable.
\end{definition}

\begin{definition}[\pillar{4}: Causal Relationships]
System $\mathcal{K}$ satisfies \pillar{4} iff relationships can participate in other relationships, and ``$r_1$ caused $r_2$'' is representable for any $r_1, r_2$.
\end{definition}

\subsection{First-Class Objects: Operational Definition}

\begin{definition}[First-Class Object]
\label{def:fc}
A relationship $r$ is first-class in $\mathcal{K}$ iff $\mathcal{K}$ provides:
\begin{enumerate}
    \item \textnormal{\textsc{Store}}: $r$ can be assigned a persistent identifier
    \item \textnormal{\textsc{Retrieve}}: $r$ can be fetched by identifier alone
    \item \textnormal{\textsc{Query}}: $r$ can be the subject of query predicates
    \item \textnormal{\textsc{Compose}}: $r$ can participate in other relationships
    \item \textnormal{\textsc{Return}}: Queries can return $r$ as a first-class result
\end{enumerate}
\end{definition}

\begin{remark}[Systematic vs.\ Ad Hoc First-Class Status]
\label{rem:systematic-fc}
Definition~\ref{def:fc} requires that the five operations apply \emph{uniformly} to all relationships in the system through a single mechanism, not per-type workarounds. A SQL junction table with a surrogate key satisfies \textnormal{\textsc{Store}} and \textnormal{\textsc{Query}} for one specific relationship type, but fails systematic first-class status in three ways: (1)~\textnormal{\textsc{Compose}} fails across types---``prescription~42 caused reaction~17'' cannot be expressed when prescriptions and reactions live in different tables without polymorphic references that abandon type safety. (2)~\textnormal{\textsc{Query}} fails for type-spanning predicates---``find all relationships involving entity~X'' requires enumerating every junction table. (3)~\textnormal{\textsc{Store}} fails for dynamic arity---adding a participant requires schema modification. The distinction is between \emph{one table at a time, each with bespoke schema} (ad hoc) and \emph{a uniform API for all relationships regardless of type or arity} (systematic). Our theorem requires the latter.
\end{remark}
\section{Related Work}
\label{sec:related}

\subsection{Causal Reasoning: Pearl's Do-Calculus}

Pearl's work on causality~\cite{pearl2009causality} established the do-calculus for interventional reasoning using directed acyclic graphs (DAGs) over random variables. Our framework differs from Pearl's in three respects. First, \emph{scope}: Pearl focuses on \emph{inferring} causal effects from observational data; our framework focuses on \emph{storing and querying} known causal relationships as first-class database objects. Second, \emph{structure}: Pearl's causal DAGs connect variables; our causal ordering $\prec$ connects hyperedges (relationships), enabling meta-causal statements such as ``this prescription relationship caused that adverse reaction relationship.'' Third, \emph{uncertainty semantics}: Pearl's framework uses conditional probability distributions; ours uses per-relationship confidence scores with explicit propagation rules (Theorem~\ref{thm:uncertainty}).

These approaches are complementary rather than competitive. Our framework can persist the output of causal discovery algorithms, providing queryability and temporal versioning for learned causal structures. Conversely, Pearl's inference machinery could operate over ATCH-stored relationships to derive interventional conclusions that the storage layer does not compute natively.

\subsection{Probabilistic Databases}

Systems like MayBMS~\cite{suciu2011probabilistic}, Trio, and ProbLog pioneered uncertainty in databases using possible-worlds semantics: a probabilistic database represents a probability distribution over a set of deterministic database instances, and query evaluation marginalizes over possible worlds. This models \emph{existential} uncertainty---``which tuples exist?''---and supports principled probabilistic inference.

Our framework models a different kind of uncertainty: \emph{epistemic} confidence---``how sure are we that this relationship holds?'' Each relationship carries a scalar $\kappa \in [0,1]$ that reflects the system's degree of belief, not a marginal probability over possible worlds. Key differences follow from this distinction. Probabilistic databases typically require tuple independence (or bounded correlations) for tractable query evaluation and face \#P-complete query complexity in the general case~\cite{suciu2011probabilistic}. Our framework imposes no independence requirement: confidence scores are properties of individual relationships, and propagation through causal chains (Theorem~\ref{thm:uncertainty}) is computed in polynomial time.

Neither approach subsumes the other. Possible-worlds semantics is more expressive for queries that require marginalization over combinatorial uncertainty (e.g., ``what is the probability that nodes A and B are connected?''). Our framework is more expressive for queries that combine confidence with temporal validity and causal structure (e.g., ``what caused this relationship, and how confident are we in each link of the causal chain?''). Integration---using possible-worlds inference over ATCH-stored relationships---is a promising direction.

\subsection{Temporal Databases}

The TSQL2 project~\cite{jensen1999temporal} and Snodgrass's subsequent work~\cite{snodgrass2000developing} established bitemporal models, building on Allen's interval algebra~\cite{allen1983maintaining}, which SQL:2011 partially adopted. Our framework adopts these semantics ($\tau_v$, $\tau_t$) but extends temporality to hyperedges, not just tuples. This enables queries like ``What was the causal structure at time $t$?'' that SQL:2011 cannot express.

\subsection{Default and Non-Monotonic Reasoning}

Reiter's default logic~\cite{reiter1980logic} formalized reasoning with defaults and exceptions using default rules of the form $\frac{\alpha : \beta}{\gamma}$ (``if $\alpha$ is believed and $\beta$ is consistent, conclude $\gamma$''). Our framework addresses the same phenomena---defaults with exceptions, common sense reasoning---through a different mechanism: defaults are relationships with high but sub-unit confidence ($\kappa < 1$), and exceptions are causal inhibitors or higher-confidence specific relationships that override them (Section~\ref{sec:commonsense}). This avoids the multiple-extension problem that arises in default logic, since conflicts are resolved by the confidence and specificity hierarchy (Section~\ref{sec:conflict}) rather than by choosing between maximally consistent extensions.

\subsection{Situation Calculus and Event Calculus}

McCarthy and Hayes~\cite{mccarthy1969some} introduced the Situation Calculus, later formalized by Reiter~\cite{reiter2001knowledge}, as a first-order logic framework for reasoning about dynamic worlds. Situations are sequences of actions; fluents are properties that vary across situations. The Situation Calculus provides a clean solution to the Frame Problem via successor state axioms, and has been extended with probabilistic variants~\cite{bacchus1999reasoning}. The Event Calculus, introduced by Kowalski and Sergot~\cite{kowalski1986logic}, takes a complementary approach: events \emph{initiate} and \emph{terminate} fluents over time, providing native temporal reasoning without explicit situation terms.

ATCH differs from both formalisms in a specific way: they solve \emph{reasoning} problems (what can be inferred about the future state?) while ATCH solves a \emph{representation} problem (what can be stored and queried as a database?). Situation Calculus fluents correspond to our time-bounded relationships, but fluents are logical terms, not queryable objects with identity---one cannot ask ``return all fluents that were true at time $t$ with confidence above 0.8.'' Event Calculus events can be embedded as ATCH hyperedges with temporal bounds, but Event Calculus lacks native support for n-ary events with attributes (P1), uncertainty quantification (P3), and events-about-events (P4). ATCH can be viewed as providing a \emph{database layer} beneath these reasoning formalisms: store the events and fluents as first-class relationships, then apply Situation Calculus or Event Calculus reasoning over the stored representation.

\subsection{Description Logics and Temporal Extensions}

Description Logics (DLs)~\cite{baader2003description} underpin the OWL family of ontology languages and provide decidable fragments of first-order logic with well-characterized complexity. Temporal extensions such as $\mathcal{ALCQI}$ with temporal operators~\cite{artale2002temporal} and the DL-Lite family for temporal ontology-based data access~\cite{artale2017ontology} add time to DL reasoning. These formalisms excel at terminological reasoning (class hierarchies, role restrictions) and provide tight complexity bounds.

However, DLs are fundamentally binary: roles connect exactly two individuals. N-ary relations require reification patterns that lose the structural guarantees of the DL framework~\cite{baader2003description}. Temporal DLs add time to role assertions but not to reified n-ary structures. Uncertainty has been addressed in probabilistic DLs~\cite{lukasiewicz2008expressive}, but these handle probabilistic class membership rather than per-relationship confidence with causal propagation. ATCH addresses a different layer: where DLs provide \emph{ontological reasoning} (what classes exist and how do they relate?), ATCH provides \emph{assertional storage} (what specific relationships hold, when, with what confidence, and why?).

\subsection{Hypergraph Databases and Theory}

Berge's foundational work~\cite{berge1989hypergraphs} established hypergraphs as the natural generalization of graphs for representing multi-entity relationships. In the database context, Abiteboul and Hull~\cite{abiteboul1988data} studied complex object models that permit nested and set-valued structures, providing some of the expressive machinery that ATCH requires. More recently, hypergraph-based knowledge representations have gained attention in the AI community: HyperGraphRAG~\cite{hypergraphrag2025} demonstrates that hypergraph-structured retrieval outperforms binary knowledge graphs for retrieval-augmented generation.

The most structurally ambitious recent formalism is \emph{Hyper-Relational Temporal Knowledge Generalized Hypergraphs} (HTKGH)~\cite{ahrabian2026htkgh}, which extends hyper-relational temporal knowledge graphs to support arbitrary numbers of primary entities plus second-order edges with temporal annotations.  HTKGH achieves three of the four pillars---n-ary structure (P1), temporal validity (P2), and partial meta-relationships (P4)---but lacks uncertainty quantification (P3) and causal ordering between relationships.  It independently validates the need for most of the ATCH framework while demonstrating the gap that the Equivalence Theorem fills: without all four pillars, the representation remains incomplete.

ATCH extends this line of work by augmenting hypergraphs with exactly the capabilities needed for structurally complete representation: temporal bounds, confidence scores, and causal ordering between hyperedges. Without these augmentations, a plain hypergraph satisfies P1 but not P2--P4. The Equivalence Theorem proves that these augmentations are not optional extras but necessary consequences of treating relationships as first-class objects.

\subsection{First-Class Relationships in Data Models}

The notion of first-class relationship status has been explored from several angles without formal unification.  Sadoughi, Yakovets, and Fletcher~\cite{sadoughi2024metapg} introduced \emph{Meta-Property Graphs}, extending the ISO property graph model with first-class treatment of labels and properties as queryable objects, and supporting reification of substructures (sub-graphs, edges, property sets) as nodes.  They explicitly frame reification as ``making something a first-class citizen.''  However, Meta-Property Graphs does not connect first-class status to temporal, causal, or uncertainty requirements---it addresses \emph{what} first-class means operationally without proving \emph{why} specific capabilities follow from it.  The Equivalence Theorem fills this gap: it proves that the operational notion of first-class status (Definition~\ref{def:fc}) necessarily entails all four pillars, and vice versa.

\section{The Equivalence Theorem}
\label{sec:equivalence}

\subsection{Formal Preliminaries}

\begin{definition}[Property Attachment]
\label{def:attachment}
A property $p$ \emph{attaches to} an object $o$ in system $\mathcal{K}$ iff:
\begin{enumerate}
    \item $\mathcal{K}$ provides storage for the pair $(o, p)$
    \item $\mathcal{K}$ supports queries of the form ``return $p$ for $o$''
    \item The property is \emph{intrinsic} to $o$, not derivable from $o$'s components
\end{enumerate}
\end{definition}

\needspace{8\baselineskip}
\begin{lemma}[Binary Decomposition Failure]
\label{lem:binary-fail}
For any relationship $r$ with arity $n \geq 3$ and attribute $a$, decomposing $r$ into binary relationships loses the attachment point for $a$.
\end{lemma}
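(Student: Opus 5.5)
The plan is to argue by Definition~\ref{def:attachment}: show that after decomposition no object of the resulting system meets all three attachment conditions for $a$. Fix $r$ over entities $v_1,\dots,v_n$ with $n\ge 3$ and an attribute $a$ intrinsic to $r$; Example~1 gives a concrete case, the meeting with attribute \texttt{productive}. A \emph{binary decomposition} is a set $D=\{b_1,\dots,b_m\}$ of binary relationships, each $b_j=(u_j,w_j)$ with $u_j,w_j\in\{v_1,\dots,v_n\}$, and no new object is introduced. This restriction is what separates decomposition from reification. The candidate attachment points are then only the $v_i$ and the $b_j$.

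First I exclude each kind of candidate. An entity $v_i$ fails condition~(3). The value of $a$ belongs to this particular co-occurrence of all $n$ participants, not to $v_i$. For example, $v_i$ may take part in a second $n$-ary relationship $r'$ whose value of $a$ differs, and then the pair $(v_i,a)$ cannot store both. A binary edge $b_j$ fails for the same reason. It sees only two of the $n\ge 3$ participants, so at least one $v_k$ is missing from it. The pair $(u_j,w_j)$ can occur in two distinct $n$-ary relationships that differ only in $v_k$ and carry different values of $a$, so $a$ is not a function of $b_j$.

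Second, I rule out spreading $a$ over all of $D$, for instance by copying it onto every $b_j$. Here I would use the projection ambiguity of Figure~\ref{fig:projection}. Two relationships $r$ and $r'$ that share pairwise overlaps produce overlapping edge sets, and the binary structure cannot tell whether a set of edges came from one $n$-ary event or from several. Concretely, take the $n$ sub-relationships of arity $n-1$ on the same vertices, or a single triangle of edges against three separate pairwise facts. The union of the edge sets is then identical while the intended attachments of $a$ differ. So condition~(2), ``return $a$ for $r$'', cannot be answered, because $r$ has no identifier in $D$ through which to query it.

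The main obstacle is making ``decomposition'' precise enough to exclude reification. If a fresh node for $r$ is allowed, $a$ does attach to it. I would handle this by defining decomposition as a representation over the original vertex set only. I would then note explicitly that reification introduces a new object, which is exactly the first-class treatment the theorem later argues for. With that definition, the counterexample construction, two $n$-ary relationships with equal binary projections but different values of $a$, carries the whole argument.
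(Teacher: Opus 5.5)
Your proposal is correct. It shares the paper's skeleton, a case analysis of where $a$ could sit after decomposition, but the argument inside each case is different.

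\textbf{The paper's route.} The paper takes $r=\{e_1,e_2,e_3\}$ and weighs three options:
\begin{itemize}
\item attach $a$ to one edge, so the other edges' membership in $r$ is lost;
\item copy $a$ onto all edges, which causes redundancy and update anomalies;
\item store $a$ in a table keyed by $(e_1,e_2,e_3)$, which the paper identifies as reification.
\end{itemize}
It concludes that every non-lossy option already implements first-class status. That is exactly the form used in Lemma~2 of the proof of Theorem~\ref{thm:equivalence}.

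\textbf{Your route.} You first fix what a decomposition is: pairs over the original vertices, with no new object. You then test each candidate against Definition~\ref{def:attachment} and replace the paper's informal consequences with collisions.
\begin{itemize}
\item An entity, or a single pair, can be shared with a second relationship carrying a different value of $a$.
\item The clique $K_n$ is also the projection of the $n$ sub-relationships of arity $n-1$; for $n=3$ these are three independent pairwise facts.
\end{itemize}

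\textbf{What your route buys.} It does more than the paper's Case~2. Copying $a$ onto every edge of an isolated triangle does not by itself lose $a$; what is lost is the grouping of the edges into $r$. Your non-injectivity argument, the phenomenon behind Figure~\ref{fig:projection} and Theorem~\ref{thm:info-loss}, shows that this loss is real rather than a matter of update anomalies. You also treat entities, which the paper skips, and you make explicit where $n\ge 3$ is used: a pair always misses some participant, and the $(n-1)$-subsets cover every pair.

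\textbf{What it costs.} Reification, including the paper's composite-key table, is excluded by definition rather than analyzed. To reach the paper's conclusion that $a$ must attach to $r$ itself, you need your closing remark that any escape from your setting introduces an object standing for $r$.

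\textbf{Two small precision points.}
\begin{itemize}
\item Your collisions show that the encoding is not injective across databases. So ``loses'' should be stated for the decomposition scheme, not for a single isolated $r$.
\item In the copy-to-all case, the colliding sub-relationships must carry the same value of $a$ as $r$. Otherwise the labeled projections differ.
\end{itemize}
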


\begin{proof}
Let $r = \{e_1, e_2, e_3\}$ with attribute $a$. Binary decomposition yields edges
\[\{(e_1, e_2),\; (e_2, e_3),\; (e_1, e_3)\}.\]
Where does $a$ attach?

\textbf{Case 1:} To one edge $\Rightarrow$ other edges' participation in $r$ is lost.

\textbf{Case 2:} To all edges $\Rightarrow$ redundancy and update anomalies.

\textbf{Case 3:} To a separate table keyed by $(e_1, e_2, e_3)$ $\Rightarrow$ this \emph{is} reification. The table row represents the relationship as an object with its own identity, satisfying \textsc{Store} and \textsc{Query}.

The ``separate table'' approach is not an alternative to first-class status; it \emph{implements} first-class status. Any non-lossy solution requires $r$ to be first-class. $\square$
\end{proof}

\subsection{Main Theorem}

\needspace{10\baselineskip}
\begin{theorem}[The Equivalence Theorem]
\label{thm:equivalence}
For any knowledge representation system $\mathcal{K}$, the following are equivalent:
\begin{align}
&(\pillar{1}) \text{ N-ary relationships with attributes} \\
&(\pillar{2}) \text{ Temporal validity} \\
&(\pillar{3}) \text{ Uncertainty quantification} \\
&(\pillar{4}) \text{ Causal relationships between relationships} \\
&(\FC) \text{ Relationships are first-class objects}
\end{align}
\end{theorem}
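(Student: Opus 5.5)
The plan is a hub-and-spoke argument with $\FC$ as the hub. I would prove $\FC \Rightarrow \pillar{i}$ and $\pillar{i} \Rightarrow \FC$ for each $i\in\{1,2,3,4\}$. Mutual entailment among the pillars then follows by transitivity, so I never need the twelve direct implications $\pillar{i}\Rightarrow\pillar{j}$. Throughout I read Definition~\ref{def:fc} in the systematic sense of Remark~\ref{rem:systematic-fc}. Each pillar also gets its strong formal reading from Definitions~\ref{def:p1}--\ref{def:p3}. In particular the closure clause in \pillar{1} and the ``every relationship'' quantifiers in \pillar{2} and \pillar{3} are what make the converse directions go through.

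\emph{Pillars to $\FC$.} The common tool is Definition~\ref{def:attachment} combined with Lemma~\ref{lem:binary-fail}.
\begin{itemize}
\item For \pillar{1}, an attribute on an $n$-ary relationship with $n\ge 3$ must attach intrinsically. By Lemma~\ref{lem:binary-fail} the only non-lossy attachment point is an object carrying the relationship's own identity, which gives \textsc{Store}, \textsc{Retrieve} and \textsc{Query}. The closure clause of Definition~\ref{def:p1} supplies \textsc{Compose}, and \textsc{Return} follows because a queryable identified object can be projected out of a query.
\item For \pillar{2} and \pillar{3}, I would argue the same way. The valid interval $\tau_v(r)$ and the confidence $\kappa(r)$ are properties intrinsic to $r$ rather than to its participants, since the same participants may stand in $r$ at one time and not another. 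So they must attach to $r$ itself, which again forces an identifier. Time-travel queries supply \textsc{Query} and \textsc{Return}. For \textsc{Compose}, \pillar{3} requires the evidential basis of $\kappa(r)$ to be representable, i.e.\ a relationship whose participant is $r$. For \pillar{2}, composition comes from non-destructive history: a superseding version must reference the version of $r$ it supersedes.
\item For \pillar{4}, the statement ``$r_1$ caused $r_2$'' is itself a relationship with $r_1,r_2$ as participants. That gives \textsc{Compose} directly, and \textsc{Store}/\textsc{Query} are needed to resolve the references.
\end{itemize}

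\emph{$\FC$ to pillars.}
\begin{itemize}
\item For $\FC\Rightarrow\pillar{4}$, \textsc{Compose} lets a ``caused'' relationship take any two identified relationships as participants.
\item For $\FC\Rightarrow\pillar{1}$, a uniform relationship object with identity can carry an attribute set and an arbitrary participant list; uniformity excludes fixed-arity schemas, and \textsc{Compose} yields closure.
\item For $\FC\Rightarrow\pillar{2},\pillar{3}$, I would argue that once $r$ is a persistent identified object, its valid interval and its confidence are properties with no other non-lossy attachment point (the converse use of Definition~\ref{def:attachment}). The evidence for $\kappa(r)$ and the succession of versions of $r$ are then representable by \textsc{Compose}.
\end{itemize}

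The main obstacle is $\FC\Rightarrow\pillar{2}$ and $\FC\Rightarrow\pillar{3}$. The definition of $\FC$ only says that time and confidence \emph{can} be attached; it does not say that $\mathcal{K}$ natively supports time-travel semantics or propagation rules. My first attempt would be to invoke expressive completeness, since every natural-language fact has a temporal scope and a certainty level that must be preserved. I would then argue that in a systematic $\FC$ system these attachments are uniform across all relationships rather than a per-type convention. That uniformity is exactly what distinguishes native support from TypeDB-style metadata. Propagation rules would then be definable generically over the \textsc{Compose} structure, for example along $\prec$. If this argument does not close, the fallback is to state explicitly that $\FC$ is taken relative to an expressively complete $\mathcal{K}$, and to note that this hypothesis is the one carrying the weight.
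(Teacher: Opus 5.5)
Your hub-and-spoke plan matches the paper's proof. It uses $\FC$ as the hub, Lemma~\ref{lem:binary-fail} with the closure clause for $\pillar{1}\Rightarrow\FC$, the evidential-basis clause of Definition~\ref{def:p3} for \textsc{Compose} in $\pillar{3}\Rightarrow\FC$, and referenceability for $\pillar{4}\Rightarrow\FC$.

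\textbf{The gap you flag is real.} The obstacle at $\FC\Rightarrow\pillar{2}$ and $\FC\Rightarrow\pillar{3}$ is a genuine gap, and the paper supplies no idea you are missing. Its inline Lemma~1 just asserts that \textsc{Store} and \textsc{Query} ``enable'' time-indexed retrieval and confidence as a property. That is exactly the attachability reading you rightly reject. Neither of your repairs closes it:
\begin{itemize}
\item Uniform attachment gives every relationship a slot for $\tau_v$ and $\kappa$. But \pillar{2} also requires history preservation with no destructive updates, and \pillar{3} requires that confidence actually propagate by well-defined rules. That propagation is \emph{definable} over the \textsc{Compose} structure does not mean $\mathcal{K}$ performs it.
\item Expressive completeness is only a representability condition on facts. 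It neither forbids in-place updates nor mandates a propagation operator. Assuming it also changes a statement that quantifies over every $\mathcal{K}$.
\end{itemize}
Here is a concrete counterexample. Take a store of UUID-keyed hyperedges of arbitrary arity, with attribute maps, whose participants may themselves be hyperedges. Let it allow in-place \texttt{UPDATE}/\texttt{DELETE} and have no propagation function. It provides all five operations of Definition~\ref{def:fc} uniformly, yet satisfies neither pillar. The paper's own System Comparison Summary table records such a system: TypeDB with $\FC$ $\checkmark$ but \pillar{2} $\times$ and \pillar{3} $\times$. So under the stated definitions these two implications are not provable unless temporal and confidence semantics are built into $\FC$ itself.

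\textbf{A second step fails that you did not flag.} For \textsc{Compose} in $\pillar{2}\Rightarrow\FC$ you say a superseding version must reference the version it supersedes. But history preservation is met by append-only versioning with no cross-reference at all: close the old row's interval and insert a new row, as in SQL:2011 period tables. Even a ``supersedes'' pointer between rows of one table is not $r$ participating as an entity in arbitrary relationships, in the systematic sense of Remark~\ref{rem:systematic-fc}. The paper argues this step differently, through a query asking for the events that caused each termination. That query imports causal content belonging to \pillar{4}, not \pillar{2}.

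You also cannot escape the first gap by weakening \pillar{2} to mere attachability. Then $\pillar{2}\Rightarrow\FC$ fails outright: a \texttt{valid\_from} column on a junction table attaches an interval to every row without giving any relationship \textsc{Compose}.
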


\begin{proof}
We establish: $\FC \Rightarrow \pillar{1..4}$ and each $\pillar{i} \Rightarrow \FC$.

\textbf{Lemma 1:} $\FC \Rightarrow (\pillar{1} \land \pillar{2} \land \pillar{3} \land \pillar{4})$. If relationships are first-class (Definition~\ref{def:fc}), then \textsc{Store}+\textsc{Query} enable attributes ($\pillar{1}$), time-indexed retrieval ($\pillar{2}$), confidence as property ($\pillar{3}$), and \textsc{Compose} enables meta-relationships ($\pillar{4}$). $\square$

\textbf{Lemma 2:} $\pillar{1} \Rightarrow \FC$. Let $r$ be n-ary with $n \geq 3$ and attribute $a$. By Lemma~\ref{lem:binary-fail}, $a$ must attach to $r$ itself. This requires $r$ to be \textsc{Store}able (persist $(r,a)$) and \textsc{Query}able (retrieve $a$ given $r$). \textsc{Return} follows: if $r$ can be the subject of a query predicate (``return relationships where $a > x$''), the result set must contain $r$ as an identifiable object---otherwise the predicate is vacuous. For \textsc{Compose}: by the closure property of Definition~\ref{def:p1}, the entity domain includes all representable relationships. Since $r$ is \textsc{Store}able (established above), $r$ is in the entity domain. P1 then permits forming a new relationship whose participants include $r$---for example, $\{r_1, r_2, \text{mechanism}\}$ with attribute $\{\text{strength}\}$. This is precisely \textsc{Compose}: relationships participating in other relationships. The closure property (Remark~\ref{rem:closure}) is the critical bridge: without it, a system could support n-ary relationships over base entities while excluding relationships from participation, which would be the ad hoc type restriction that systematic first-class status precludes. $\square$

\textbf{Lemma 3:} $\pillar{2} \Rightarrow \FC$. Temporal validity $\tau_v(r)$ is a property of $r$'s truth across time. It cannot attach to endpoints (which have different lifespans) or to nothing (it must be stored). By Definition~\ref{def:attachment}, $r$ must be storable and queryable. The critical test is \emph{temporal identity}: when the same real-world relationship holds during intervals $[t_1, t_2]$ and $[t_3, t_4]$ (e.g., an employee transfers away and later returns), a system without first-class relationships represents this as two separate rows with no shared referent. Queries \emph{about} the relationship's temporal history (``how many times has this relationship been true?'') become aggregation queries over an implicit grouping rather than property lookups on an identified object. Note that P2 explicitly requires time-travel queries and history preservation (no destructive updates); these requirements go beyond row versioning with composite keys, because history preservation requires that the relationship's full temporal trajectory be reconstructible as a property of a single identified object. The moment temporal history is queryable, the relationship must have persistent identity---which is \textsc{Store}. Time-travel queries require \textsc{Return}. For \textsc{Compose}: P2 requires that history is preserved and that time-travel queries are supported. Consider the query ``show me all relationships that were valid on January~1 but had been terminated by March~1, together with the events that caused each termination.'' This query asks for meta-relationships: it seeks connections \emph{between} a relationship $r$, its temporal boundaries, and the events that altered those boundaries. A system that cannot represent $r$ as a participant in such meta-relationships cannot answer this query; it can only list rows with matching timestamps, losing the structural connection between the relationship, its temporal trajectory, and the events driving that trajectory. Since P2 requires time-travel queries over preserved history, and such queries inherently involve meta-relationships with $r$ as participant, \textsc{Compose} follows. Therefore $r$ is first-class. $\square$

\textbf{Lemma 4:} $\pillar{3} \Rightarrow \FC$. Confidence $\kappa(r)$ is a \emph{meta-property}---it describes $r$'s truth status, not a fact about the participating entities. A confidence column on a junction table is syntactically possible, but consider: if the same relationship has been asserted multiple times with different confidence values (e.g., one sensor reports 0.6, another reports 0.9), which row ``is'' the relationship? The system must either maintain a single relationship object whose confidence is updated (requiring persistent identity = \textsc{Store}), or treat each row as independent (losing the connection between assessments of the same relationship). Furthermore, confidence-based queries (``find relationships where $\kappa > 0.9$'') require \textsc{Query}. For \textsc{Compose}: Definition~\ref{def:p3} requires that the evidential basis for each confidence value---the supporting or undermining evidence and the assessment methodology---is itself representable and queryable. An evidential basis is a relationship among the assessed relationship~$r$, the evidence, and the methodology: for example, ``sensor~$s_1$ assessed relationship~$r$ using methodology~$m$ and reported $\kappa = 0.9$.'' The participants of this evidential relationship include $r$ itself. Representing it therefore requires $r$ to participate in another relationship, which is precisely \textsc{Compose}. This is not an incidental design choice but a requirement of Definition~\ref{def:p3}: without representable evidential bases, confidence values are opaque floats with no principled way to resolve conflicts, combine evidence, or propagate through inference. \textsc{Return} follows from \textsc{Query} as in Lemma~2. $\square$

\textbf{Lemma 5:} $\pillar{4} \Rightarrow \FC$. If ``$r_1$ caused $r_2$'' is representable, then $r_1, r_2$ must be referenceable as arguments. Referenceable $\Leftrightarrow$ persistent identifier $\Leftrightarrow$ \textsc{Store}. Participation in causal relation $\Leftrightarrow$ \textsc{Compose}. This is precisely first-class. $\square$

By transitivity: $\pillar{1} \Rightarrow \FC \Rightarrow \pillar{2}$, etc. All conditions are equivalent. $\square$
\end{proof}

\begin{corollary}[All or Nothing]
\label{cor:all-or-nothing}
A system missing any pillar cannot satisfy any other pillar without workarounds that effectively implement the missing capability.
\end{corollary}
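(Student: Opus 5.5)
The corollary follows from Theorem~\ref{thm:equivalence} by contraposition, with the word ``workaround'' read through Remark~\ref{rem:systematic-fc}. Fix a system $\mathcal{K}$ and indices $i \neq j$. Suppose $\mathcal{K}$ lacks $\pillar{i}$ but satisfies $\pillar{j}$. The chain $\pillar{j} \Rightarrow \FC \Rightarrow \pillar{i}$ established in the theorem would then give $\pillar{i}$, a contradiction. So if $\mathcal{K}$ is missing $\pillar{i}$, it cannot \emph{systematically} satisfy any $\pillar{j}$.

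The remaining content concerns systems that appear to satisfy $\pillar{j}$ anyway, for example SQL junction tables with a confidence column or reified LPG nodes. I would show that any such appearance must be an ad hoc realization of \FC{}, and hence of the capability behind $\pillar{i}$. The plan has three steps.

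\textbf{Step one.} Take the lemma of the theorem's proof that derives \FC{} from $\pillar{j}$ (Lemma~2, 3, 4 or~5). Each of these argues by forcing the relevant property to attach to $r$ in the sense of Definition~\ref{def:attachment}. It does so through Lemma~\ref{lem:binary-fail}, temporal identity, evidential bases, or referenceability. In each case the lemma concludes that any non-lossy encoding supplies \textsc{Store}, \textsc{Query}, \textsc{Compose} and \textsc{Return} for $r$. If $\mathcal{K}$ only achieves this per type, as in the ``separate table'' case of Lemma~\ref{lem:binary-fail} or the polymorphic \texttt{(entity\_type, entity\_id)} pattern, then it has built first-class status by hand.

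\textbf{Step two.} Apply Lemma~1 to this hand-built mechanism. Once relationships carry identifiers and can be composed, the same mechanism can store $\tau_v$, $\kappa$ and causal links. It therefore \emph{effectively implements} $\pillar{i}$, which is the missing capability.

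\textbf{Step three.} Combine the two directions. Either $\pillar{j}$ fails, or its presence is witnessed by a workaround that realizes \FC{} and thereby the missing pillar. This is exactly the disjunction the corollary states.

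The main obstacle is making ``workaround that effectively implements'' precise enough for Step two to be a genuine implication rather than a slogan. I would first try to formalize it as follows: a workaround is an encoding of the five operations of Definition~\ref{def:fc} that is non-uniform across relationship types, in the sense of Remark~\ref{rem:systematic-fc}. I would then check that Lemma~1's derivation of each pillar uses only the existence of these operations and not their uniformity. If that check fails for some pillar, most plausibly $\pillar{3}$, whose propagation requirement is more than storage, I would weaken the conclusion for that pillar. In that case I would claim only that the workaround supplies the \emph{structural} part of the missing pillar, namely an attachment point with identity and composability, and leave propagation semantics as additional machinery layered on top.
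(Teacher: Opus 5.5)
Your first paragraph is exactly the argument the paper relies on. The paper gives no separate proof and treats the corollary as the contrapositive of the chain $\pillar{j} \Rightarrow \FC \Rightarrow \pillar{i}$ from Theorem~\ref{thm:equivalence}, leaving the ``workaround'' clause as an informal gloss, so your Steps one--three elaborate on that argument rather than take a different route. If you pursue that formalization, note that Lemma~1 of the theorem's proof asserts rather than derives the non-storage parts of the pillars. The weakening you anticipate for $\pillar{3}$ (propagation, evidential basis) will therefore also be needed for $\pillar{2}$ (time-travel queries, no destructive updates), since no \FC{} operation guarantees them.
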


The mutual entailment structure is depicted in Figure~\ref{fig:equivalence}. Corollary~\ref{cor:all-or-nothing} has immediate practical consequences: it means that incremental addition of pillars to an existing system is futile unless the addition achieves all four simultaneously.

\begin{figure}[ht]
\centering
\begin{tikzpicture}[
  pillar/.style={rectangle, draw=blue!60!black, rounded corners=6pt, fill=blue!8,
                 minimum width=2.6cm, minimum height=1.1cm, align=center, font=\small},
  fc/.style={rectangle, draw=green!50!black, rounded corners=6pt, fill=green!12,
             minimum width=3cm, minimum height=1.3cm, align=center, font=\small\bfseries},
  equiv/.style={<->, thick, blue!60!black}
]

\node[fc] (fc) at (0, 0) {First-Class\\Relationships};

\node[pillar] (p1) at (-4, 2) {\textbf{P1:} N-ary +\\Attributes};
\node[pillar] (p2) at (4, 2) {\textbf{P2:} Temporal\\Validity};
\node[pillar] (p3) at (-4, -2) {\textbf{P3:} Uncertainty\\Quantification};
\node[pillar] (p4) at (4, -2) {\textbf{P4:} Causal\\Relationships};

\draw[equiv] (p1) -- (fc) node[midway, above, sloped, font=\tiny, blue!50!black] {$\Leftrightarrow$};
\draw[equiv] (p2) -- (fc) node[midway, above, sloped, font=\tiny, blue!50!black] {$\Leftrightarrow$};
\draw[equiv] (p3) -- (fc) node[midway, below, sloped, font=\tiny, blue!50!black] {$\Leftrightarrow$};
\draw[equiv] (p4) -- (fc) node[midway, below, sloped, font=\tiny, blue!50!black] {$\Leftrightarrow$};

\draw[equiv, gray!50, thin] (p1) -- (p2);
\draw[equiv, gray!50, thin] (p3) -- (p4);
\draw[equiv, gray!50, thin] (p1) -- (p3);
\draw[equiv, gray!50, thin] (p2) -- (p4);

\node[font=\scriptsize, text width=8cm, align=center] at (0, -3.5) {All five conditions are equivalent. Any system with one necessarily has all four.\\Any system missing one cannot achieve structurally complete representation.};

\end{tikzpicture}
\caption{The Equivalence Theorem: mutual entailment between first-class relationship status and the four pillars. Each pillar individually entails first-class status, which in turn entails all four pillars.}
\label{fig:equivalence}
\end{figure}
\section{Applications to Classical AI Problems}
\label{sec:applications}

The Equivalence Theorem establishes that first-class relationships entail all four pillars.  A natural question follows: what do those pillars \emph{buy} beyond representational completeness?  In this section we show that the four pillars, taken together, provide natural solutions to four classical AI problems---the Frame Problem, conflict resolution, common sense reasoning, and uncertainty propagation---that have resisted clean solutions in systems lacking one or more pillars.  Each subsection addresses a different problem; the unifying thread is that every solution arises directly from the interaction of the pillars rather than from ad hoc extensions.  The Frame Problem is resolved by temporal persistence (Theorem~\ref{thm:frame}), uncertainty propagation through causal chains is formalized in Section~\ref{sec:uncertainty-prop}, and hidden variable discovery (Theorem~\ref{thm:hidden-var}) provides automated conflict resolution.

\subsection{The Frame Problem}
\label{sec:frame}

The Frame Problem is the classical AI challenge of representing what \emph{doesn't} change when an action occurs. In traditional logic, stating ``Bob rebooted the VPN'' requires also stating that the office address didn't change, the sky is still blue, etc.---an infinite set of ``frame axioms.''  McCarthy and Hayes~\cite{mccarthy1969some} identified the problem; classical solutions include STRIPS-style explicit frame axioms, the Situation Calculus, and the Fluent Calculus.  Each requires the knowledge engineer to specify, for every action, what remains unchanged---a burden that grows combinatorially.

\begin{theorem}[Persistence by Default]
\label{thm:frame}
In our framework, a relationship $r$ created at time $t_1$ remains valid for all $t > t_1$ until explicitly terminated. No frame axioms are required.
\end{theorem}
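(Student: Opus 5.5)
The plan is to derive Theorem~\ref{thm:frame} directly from the semantics of the valid-time assignment $\tau_v$ in Definition~\ref{def:atch} together with the history-preservation clause of \pillar{2}. The result is essentially definitional. The work lies in fixing the semantics so that ``valid at $t$'' is a function of stored data alone, and then checking that no other operation changes it.

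\textbf{Step 1: semantics of validity.} First I make precise what $\tau_v(r)$ means when a relationship is created. Creating $r$ at $t_1$ with no known end assigns $\tau_v(r) = [t_1, \infty)$, the half-open interval already used in Example~3 (``$[\,\dots, \infty)$''). I then define the time-travel predicate as
\[
\mathrm{valid}(r,t) \iff t \in \tau_v(r).
\]
The only operation that can shrink the upper bound is an explicit termination event. Such an event sets $\tau_v(r) = [t_1, t_2)$ and is recorded with a fresh transaction time in $\tau_t$. Because \pillar{2} forbids destructive updates, this yields a new version rather than overwriting the old one.

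\textbf{Step 2: persistence.} Next I prove persistence by a simple induction over the sequence of operations applied to $\mathcal{H}$ after $t_1$. Each operation is one of three kinds:
\begin{itemize}
\item insertion of another hyperedge,
\item modification of attributes or of $\kappa$, or
\item termination.
\end{itemize}
I check that only termination of $r$ itself alters $\tau_v(r)$. Insertions touch only new elements of $\mathcal{E}$. Attribute and confidence updates modify $\mathcal{A}_E$ and $\kappa$, not $\tau_v$. Causal links added to $\prec$ relate $r$ to other hyperedges without modifying $\tau_v(r)$. It follows that, absent a termination of $r$, $\tau_v(r) = [t_1, \infty)$ still holds, so $\mathrm{valid}(r,t)$ holds for every $t > t_1$.

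\textbf{Step 3: no frame axioms.} Finally I argue that no frame axioms are needed. An action is itself a hyperedge $a$. Its effects are exactly the relationships it creates, plus those it terminates via causal links $a \prec r$. The validity of any relationship not so linked is settled by Step~2 without any further axiom about $a$. The number of stored facts per action is therefore proportional to the action's actual effects, not to the size of the state.

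\textbf{Main obstacle.} The delicate point is the interplay with time travel. Suppose a termination is recorded at transaction time $t_\mathrm{tx}$ but backdated in valid time. Then ``remains valid until explicitly terminated'' must be read relative to a fixed transaction-time snapshot. My approach is to state the theorem for each fixed $\tau_t$ horizon and to note that the bitemporal preservation in \pillar{2} makes each snapshot well defined. I also intend to remark that this persistence is a representational default: whether an unrecorded real-world change has occurred is outside what storage semantics can guarantee.
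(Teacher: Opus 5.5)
Your proposal is correct and follows essentially the paper's own argument: creation gives an open-ended interval $[t_1,\infty)$, an action only adds new hyperedges and leaves every existing $\tau_v$ unchanged, and so the ``frame axiom'' is implicit in the absence of a termination event. Your induction over operation sequences and your fixed-transaction-time caveat for backdated terminations are refinements that the paper's one-step argument leaves implicit, not a different route.
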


\begin{proof}
By Definition~\ref{def:atch}, every hyperedge $e$ has a valid time interval $\tau_v(e) = [t_{start}, t_{end}]$. If $t_{end} = \infty$ (or NULL), $e$ is currently valid.

When an action occurs that creates new relationship $e'$, only $e'$ is added. Existing relationships $e_1, e_2, \ldots$ with $t_{end} = \infty$ remain unchanged.

The ``frame axiom'' for each $e_i$ is implicit: $e_i$ persists because no termination event occurred. $\square$
\end{proof}

\begin{definition}[Causal Blast Radius]
When causal event $e_c$ occurs, only relationships in the \emph{causal closure} of $e_c$ may be affected:
$$\text{Affected}(e_c) = \{e : e_c \prec^* e\} \cup \{e : e \prec^* e_c\}$$
where $\prec^*$ is the transitive closure of $\prec$.
\end{definition}

\begin{corollary}[Automatic Isolation]
Relationships not in $\text{Affected}(e_c)$ are provably unchanged by $e_c$, without explicit frame axioms.
\end{corollary}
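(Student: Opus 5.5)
The plan is to derive the corollary from Theorem~\ref{thm:frame} together with the definition of $\text{Affected}(e_c)$, reading ``changed'' in the only sense the framework allows. The data model supports no destructive updates: history is preserved by \pillar{2}. So a relationship $e$ is changed by $e_c$ exactly when one of three things happens: its valid-time interval $\tau_v(e)$ is terminated or modified, its confidence $\kappa(e)$ is revised, or its attributes $\mathcal{A}_E(e)$ are superseded.

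I first fix the operational semantics of an event occurrence. When $e_c$ occurs, the system adds $e_c$ to $\mathcal{E}$. It also adds causal links $e_c \prec e'$ for the relationships it creates or terminates, and links $e'' \prec e_c$ for the relationships that gave rise to it. The key modeling commitment, which I would state as an explicit invariant of ATCH updates, is this: any alteration of an existing hyperedge $e$ that is attributed to $e_c$ must be recorded as a causal link from $e_c$ to $e$ or to the relationship terminating $e$. This is exactly the content of \pillar{4}, which says ``$r_1$ caused $r_2$'' is representable and is how causation is stored. It also matches the trace semantics used in Section~\ref{sec:examples}.

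Next I argue by contraposition. Suppose $e \notin \text{Affected}(e_c)$. Then $e_c \not\prec^* e$ and $e \not\prec^* e_c$. By the invariant, no modification of $e$ is attributed to $e_c$, whether directly or through an intermediate chain $e_c \prec e_1 \prec \cdots \prec e$; any such chain would put $e$ in the transitive closure. Applying Theorem~\ref{thm:frame} to $e$, it remains valid with its interval, confidence and attributes intact, because no termination event reachable from $e_c$ occurred. The conclusion is therefore derived from persistence plus reachability, and no frame axiom is introduced. I would also note that the backward half $\{e : e \prec^* e_c\}$ is included only conservatively. Excluding it makes the statement stronger on the forward side and harmless on the backward side.

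The main obstacle is the invariant itself. Without it, a system could terminate $e$ ``because of'' $e_c$ without recording a link, and the corollary would fail. I would first try to derive the invariant from \pillar{4} combined with Definition~\ref{def:attachment}. The causal attribution is a property that must attach to the pair $(e_c, e)$ as a stored object, so an unrecorded attribution is not represented in $\mathcal{K}$ at all. If that derivation feels too thin, I would instead state the invariant as a well-formedness condition on ATCH update operations and prove the corollary relative to it.
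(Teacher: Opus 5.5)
Your argument matches the paper's. The paper gives the corollary no separate proof: it follows immediately from Theorem~\ref{thm:frame} and the blast-radius definition, whose clause ``only relationships in the causal closure of $e_c$ may be affected'' is exactly the invariant you isolate, so your fallback of taking it as a well-formedness condition is the right move (it does not follow from \pillar{4}, which makes causal links representable, not mandatory). One adjustment: your invariant is forward-only, whereas the paper stipulates the two-sided closure ($e_c \prec^* e$ or $e \prec^* e_c$), and although the corollary follows from either, your remark that dropping the backward half is harmless holds only under your stronger version, since with $\prec$ a partial order an effect of $e_c$ on one of its own causes could never be recorded as $e_c \prec e$.
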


The key advantage over prior approaches is that persistence is the \emph{default behavior of the storage model}, not an added inference rule.  Traditional databases are snapshot-based (to know the state at $T_2$, you need a complete new picture); the framework is delta-based (only changes are recorded, and unaffected relationships persist automatically).

\subsection{Uncertainty Propagation}
\label{sec:uncertainty-prop}

When reasoning through a causal link, three uncertainty components combine:

\begin{definition}[Link Uncertainty Components]
For causal link $e_1 \prec e_2$:
\begin{itemize}
    \item $\kappa(e_1)$: confidence that cause $e_1$ occurred
    \item $\kappa_{\prec}(e_1, e_2)$: confidence in the causal mechanism
    \item $\kappa(e_2 | e_1)$: conditional confidence given $e_1$
\end{itemize}
\end{definition}

\begin{theorem}[Uncertainty Propagation]
\label{thm:uncertainty}
For a causal chain $e_1 \prec e_2 \prec \cdots \prec e_n$, the joint confidence is:
$$\kappa_{chain} = \kappa(e_1) \cdot \prod_{i=1}^{n-1} \kappa_{\prec}(e_i, e_{i+1}) \cdot \text{Context}(e_i, e_{i+1})$$
where $\text{Context}(e_i, e_{i+1})$ is a modifier based on N-ary attributes.
\end{theorem}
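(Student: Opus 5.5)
The theorem is essentially a chain rule for confidences, so I will argue by induction on the chain length $n$, after first fixing a semantics for what $\kappa_{chain}$ means. The natural reading is that $\kappa_{chain}$ is the confidence that the whole chain holds jointly: cause $e_1$ occurred, and for each $i$ the causal link $e_i \prec e_{i+1}$ actually operated, so that $e_{i+1}$ was produced by $e_i$. Under the Link Uncertainty Components definition, each link contributes two factors. The first is the mechanism confidence $\kappa_{\prec}(e_i,e_{i+1})$. The second is a conditional term, which I will identify with $\text{Context}(e_i,e_{i+1})$. That term is read off from the hyperedge attributes $\mathcal{A}_E(e_i)$ and $\mathcal{A}_E(e_{i+1})$ supplied by \pillar{1}. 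So I will first state explicitly that $\text{Context}(e_i,e_{i+1}) := \kappa(e_{i+1}\mid e_i,\ \text{mechanism})$, evaluated given the n-ary attributes. I will also add the standing assumption that link confidences combine multiplicatively, i.e.\ the chain is Markov: conditional on $e_i$ having occurred, the confidence in link $i$ does not depend on $e_1,\dots,e_{i-1}$.

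For the base case $n=1$ the product is empty and $\kappa_{chain}=\kappa(e_1)$, which is exactly the definition of $\kappa$. For the inductive step, suppose the formula holds for the prefix $e_1\prec\cdots\prec e_k$. The chain up to $e_{k+1}$ holds iff the prefix holds and the final link operates. The multiplication rule for confidences gives
\[
\kappa_{1..k+1} = \kappa_{1..k}\cdot \kappa_{\prec}(e_k,e_{k+1})\cdot \text{Context}(e_k,e_{k+1}).
\]
The Markov assumption is what allows conditioning on the whole prefix to be replaced by conditioning on $e_k$ alone. Substituting the inductive hypothesis then yields the stated product. Well-definedness also needs checking. Because $\prec$ is a partial order by Definition~\ref{def:atch}, the chain is acyclic, so no factor is counted twice. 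Each factor lies in $[0,1]$, so the product lies in $[0,1]$ and is a legitimate confidence. It is also monotone non-increasing in $n$, which is the expected degradation along longer chains. Finally, the product is computed in $O(n)$ time, which supports the polynomial-time claim made in the related-work section.

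The main obstacle is the independence/Markov step. The statement as written does not fix the semantics of $\kappa$, and the related-work discussion explicitly disclaims independence assumptions. My plan is therefore to make the statement precise by defining $\text{Context}$ as the conditional factor. That definition absorbs any dependence on the attributes of $e_i$ and $e_{i+1}$, and so shifts the burden onto $\text{Context}$ rather than assuming independence globally. If the factors must instead be given as primitive scores, I would fall back to presenting the formula as the \emph{definition} of chain confidence under \pillar{3}'s ``well-defined propagation rules.'' In that case I would prove only its properties: range in $[0,1]$, monotone decay with $n$, agreement with the probabilistic chain rule whenever the Markov condition holds, and polynomial computability.
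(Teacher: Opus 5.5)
The paper supplies no proof of this statement. The formula is asserted, illustrated by the surge-protector example ($\text{Context}=1-0.7$), and used elsewhere with $\text{Context}\equiv 1$: in the depth-termination remark, and in the Q7 output $0.73\cdot 0.89\cdot 0.78\approx 0.51$. That Q7 output supports your ``whole chain holds'' semantics, since the stored confidences $0.95$ and $0.62$ of the downstream relationships do not enter. In effect the paper treats the formula as the ``well-defined propagation rule'' required by Definition~\ref{def:p3}, which is exactly your fallback.

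Your primary route is genuinely different. It reads the factors probabilistically and derives the product by induction from the multiplication rule. That gives content the paper's stipulation lacks: the result provably lies in $[0,1]$ (nothing in the paper stops a ``modifier'' from exceeding $1$), and it agrees with probability theory. The price is an explicit conditional-independence hypothesis. This conflicts with the related-work claim that the framework ``imposes no independence requirement''; that claim survives only under the stipulative reading.

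One correction to your last paragraph. Defining $\text{Context}(e_i,e_{i+1})$ as $\kappa(e_{i+1}\mid e_i,\text{mechanism})$ does \emph{not} absorb dependence on $e_1,\dots,e_{i-1}$, so the Markov assumption is still what carries the inductive step. Removing it would force $\text{Context}$ to condition on the entire prefix, and it would then no longer be a function of the pair $(e_i,e_{i+1})$.

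Similarly, splitting the link factor as $\kappa_{\prec}\cdot\text{Context}$ requires $\kappa_{\prec}(e_i,e_{i+1})$ to be the mechanism's confidence conditional on $e_i$, or independent of it. State this alongside the Markov hypothesis.
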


\begin{example}[Contextual Modification]
Consider: Power Surge $\xrightarrow{0.9}$ PSU Failure $\xrightarrow{0.8}$ Motherboard Short.

If the PSU has attribute \texttt{surge\_protector = true}, the context modifier reduces the second link's effective confidence:
$$\kappa_{effective} = 0.8 \times (1 - 0.7) = 0.24$$
\end{example}

When multiple causal paths lead to the same conclusion:

\begin{theorem}[Path Integration]
Given paths $P_1, P_2, \ldots, P_k$ to conclusion $e$, with confidences $\kappa_1, \ldots, \kappa_k$:
$$\kappa(e) = 1 - \prod_{i=1}^{k}(1 - \kappa_i)$$
assuming independent paths (Noisy-OR combination).
\end{theorem}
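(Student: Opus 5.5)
The argument I have in mind derives Noisy-OR from the complement rule together with the stated independence hypothesis. Treat each causal path $P_i$ to the conclusion $e$ as an independent route by which $e$ could be established. By Theorem~\ref{thm:uncertainty}, each path has a well-defined chain confidence $\kappa_i$. I interpret $\kappa_i$ as the degree of belief that $P_i$ successfully establishes $e$, and $1-\kappa_i$ as the degree of belief that $P_i$ fails to do so. The conclusion $e$ fails to be supported exactly when every path fails.

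The key steps, in order, are as follows.

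First, I would fix the semantics. For each $i$, introduce the event $S_i$ meaning ``path $P_i$ establishes $e$,'' and interpret confidences on $[0,1]$ as a measure satisfying $\Pr(S_i)=\kappa_i$ and the complement rule $\Pr(\overline{S_i}) = 1-\kappa_i$. The conclusion holds iff $\bigcup_i S_i$ occurs, so $\kappa(e) = \Pr(\bigcup_i S_i) = 1 - \Pr(\bigcap_i \overline{S_i})$ by De Morgan.

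Second, I would invoke the independence hypothesis. Mutual independence of the $S_i$ implies mutual independence of the complements $\overline{S_i}$. I would show this by a short induction on $k$, using $\Pr(A\cap \overline{B}) = \Pr(A)-\Pr(A\cap B)$. It then follows that $\Pr(\bigcap_i \overline{S_i}) = \prod_i (1-\kappa_i)$, which gives the formula.

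Third, I would sanity-check the boundary behaviour. With $k=1$ the formula reduces to $\kappa_1$. The value is monotone in each $\kappa_i$, stays in $[0,1]$, and equals $1$ whenever some $\kappa_i=1$. This confirms consistency with Theorem~\ref{thm:uncertainty}.

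The main obstacle is the first step. The paper explicitly distinguishes its epistemic $\kappa$ from possible-worlds probabilities, so the complement rule and the multiplication rule for independent events are not automatically available. I would handle this by stating them as the semantic assumption encoded in ``assuming independent paths.'' Concretely, I would interpret path confidences as success probabilities of independent causal mechanisms, in the standard Noisy-OR model. Under that reading the remaining steps are elementary. If the paths share edges, they are not independent, and the formula should then be presented only as an approximation.
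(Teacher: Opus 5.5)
Your proposal is correct and follows essentially the same route as the paper's proof sketch. Both read $\kappa_i$ as the probability that path $P_i$ produces $e$, use independence to get $\prod_i(1-\kappa_i)$ as the probability that no path succeeds, and then take the complement; your induction showing that independence passes to the complements, and your explicit statement that the probabilistic reading of $\kappa$ is an assumption, fill in details the paper leaves implicit.
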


\begin{proof}[Proof sketch]
Each path $P_i$ independently produces $e$ with probability $\kappa_i$, so the probability that path $P_i$ does \emph{not} produce $e$ is $(1 - \kappa_i)$.  By independence the probability that \emph{no} path produces $e$ is $\prod_{i=1}^{k}(1 - \kappa_i)$.  Complementing gives $\kappa(e) = 1 - \prod_{i=1}^{k}(1 - \kappa_i)$, the Noisy-OR formula.  The independence assumption is discussed in the remark below; when it is violated, conservative alternatives apply.
\end{proof}

\begin{remark}[Path Independence Assumption]
The Noisy-OR combination rule assumes path independence---that the causal mechanisms along distinct paths do not share common causes. When paths share upstream causes (e.g., two consequences of the same root event), independence is violated and the combined confidence may be overestimated. In such cases, alternative combination rules apply: a max-confidence rule ($\kappa(e) = \max_i \kappa_i$) provides a conservative bound, while junction-tree methods from probabilistic graphical models can compute exact joint confidence when the dependency structure is known. The framework stores sufficient information (the full causal graph) to detect shared ancestors and select the appropriate combination rule automatically.
\end{remark}

\begin{example}
Path A: Hardware failure $\to$ Offline (65\% confidence)\\
Path B: Scheduled maintenance $\to$ Offline (20\% confidence)\\
Combined: $1 - (1-0.65)(1-0.20) = 72\%$ confidence system is offline.
\end{example}
\subsection{Conflict Resolution}
\label{sec:conflict}

In traditional databases, contradictions are bugs. In our framework, contradictions are unresolved metadata.

\subsubsection{Non-Destructive Storage}

\begin{principle}[Logical Coexistence]
If relationship $R_1$ asserts $A$ and relationship $R_2$ asserts $\neg A$, both exist in the hypergraph simultaneously. Resolution uses relationship attributes.
\end{principle}

\subsubsection{Resolution Hierarchy}

\begin{definition}[Conflict Resolution Order]
Given conflicting relationships $R_1, R_2$:
\begin{enumerate}
    \item \textbf{Temporal}: Is one newer? ($\tau_v(R_2).start > \tau_v(R_1).start$)
    \item \textbf{Confidence}: Does one have higher $\kappa$?
    \item \textbf{Source}: Attribute-based priority (sensor $>$ manual entry)
    \item \textbf{Specificity}: More specific context wins
\end{enumerate}
\end{definition}

\subsubsection{The Context-Collapse Principle}

\begin{principle}[Hidden Context]
\label{thm:context-collapse}
If relationships $R_1$ and $R_2$ assert $A$ and $\neg A$ respectively within the same system, and both have non-zero confidence, then under the assumption that the system contains no genuinely contradictory ground truths, there exists a context variable $z$ such that both are conditionally true:
$$R_1: A \text{ given } z = z_1 \qquad R_2: \neg A \text{ given } z = z_2$$
\end{principle}

\textbf{Justification.}
If $A$ and $\neg A$ both hold with non-zero confidence based on observations, then the observations supporting each were generated under different conditions. Let $z$ represent the conjunction of all conditions distinguishing these observation contexts. Then $A|_{z=z_1}$ and $\neg A|_{z=z_2}$ are both consistent, and the apparent contradiction is resolved by making the hidden context explicit as an N-ary attribute of the respective relationships. The principle provides \emph{representational capacity} for context-dependent truth; the \emph{discovery} of the appropriate context variable $z$ is performed by domain experts or by the automated procedure in Section~\ref{sec:hidden-variable-discovery}. $\square$

\begin{example}
Contradiction: ``The drug is effective'' vs. ``The drug is ineffective.''\\
Resolution: ``The drug is effective (for Group X)'' and ``ineffective (for Group Y).''\\
The N-ary attribute (patient group) resolves the apparent contradiction.
\end{example}

\begin{example}[Infrastructure Monitoring]
Sensor~A reports ``Server offline'' ($\kappa = 0.95$) while Sensor~B reports ``Server online'' ($\kappa = 0.92$) at the same timestamp.  In a standard monitoring system, this is a contradiction requiring manual triage.  In ATCH, both relationships carry the N-ary attribute \texttt{network\_segment}: Sensor~A observes from Segment~1 (where a firewall rule blocks ICMP), Sensor~B observes from Segment~2 (with direct connectivity).  The hidden context variable is \texttt{network\_segment}; both readings are correct within their respective contexts, and the apparent contradiction dissolves without discarding either observation.
\end{example}

\subsubsection{Causal Audit}

When conflicts arise, our framework can trace their causal origins:

\needspace{12\baselineskip}
\begin{framed}
\noindent\textbf{Algorithm: Causal Conflict Resolution}
\begin{enumerate}
    \item \textbf{Given:} conflicting relationships $R_1, R_2$
    \item $C_1 \gets \text{CausalChain}(R_1)$
    \item $C_2 \gets \text{CausalChain}(R_2)$
    \item \textbf{If} $\exists e \in C_1$ with low $\kappa$ (faulty source):
    \begin{itemize}
        \item Create explanation: ``$e$ caused incorrect belief $R_1$''
        \item Prefer $R_2$
    \end{itemize}
\end{enumerate}
\end{framed}

\subsubsection{Automated Hidden Variable Discovery}
\label{sec:hidden-variable-discovery}

The Context-Collapse Principle (Principle~\ref{thm:context-collapse}) guarantees that when contradictory
relationships both accumulate confidence, a hidden context variable exists. This
subsection formalizes the mechanical process by which the framework discovers
that variable from its own stored attributes.

\begin{definition}[Contradiction Signal]
A \emph{contradiction signal} is triggered when two relationship clusters
$\mathcal{C}_R = \{e : e \text{ supports } R, \kappa(e) > \theta\}$ and
$\mathcal{C}_{\neg R} = \{e : e \text{ supports } \neg R, \kappa(e) > \theta\}$
both exceed a confidence threshold $\theta$ after Noisy-OR accumulation.
\end{definition}

When a contradiction signal fires, the hidden variable is identified by examining
the N-ary attributes (P1) attached to the relationships in each cluster.

\begin{definition}[Attribute Partition Quality]
For attribute $a$ with values drawn from domain $\mathcal{A}$, and contradiction
clusters $\mathcal{C}_R$, $\mathcal{C}_{\neg R}$, define the \emph{partition
quality} of $a$ as the information gain:
\[
\text{IG}(a) = H(\mathcal{C}) - \sum_{v \in \mathcal{A}} \frac{|\mathcal{C}_{a=v}|}{|\mathcal{C}|} H(\mathcal{C}_{a=v})
\]
where $\mathcal{C} = \mathcal{C}_R \cup \mathcal{C}_{\neg R}$,
$\mathcal{C}_{a=v}$ is the subset of observations where attribute $a$ takes
value $v$, and $H$ is the Shannon entropy of the class label ($R$ vs.\ $\neg R$)
within each subset.
\end{definition}

\begin{theorem}[Hidden Variable Identification]
\label{thm:hidden-var}
Given a contradiction signal between $\mathcal{C}_R$ and $\mathcal{C}_{\neg R}$,
the hidden context variable $z$ from Principle~\ref{thm:context-collapse} corresponds to the attribute
$a^*$ that maximizes partition quality:
\[
a^* = \arg\max_{a \in \mathcal{A}_E} \text{IG}(a)
\]
where $\mathcal{A}_E$ is the set of all attributes attached to hyperedges in
$\mathcal{C}_R \cup \mathcal{C}_{\neg R}$.
\end{theorem}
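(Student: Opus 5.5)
The plan is to read the statement as an identification claim under two explicit hypotheses that the Context-Collapse Principle (Principle~\ref{thm:context-collapse}) already implicitly supplies. First, the distinguishing context $z$ is \emph{recorded}: it appears among the N-ary attributes $\mathcal{A}_E$ of the hyperedges in $\mathcal{C} = \mathcal{C}_R \cup \mathcal{C}_{\neg R}$. By P1 such attributes are attachable, and the Justification of the Principle stores $z$ exactly there. Second, $z$ \emph{separates} the clusters: $R$ holds given $z = z_1$ and $\neg R$ holds given $z = z_2$. So every observation with $z$-value in the ``$R$'' block lies in $\mathcal{C}_R$, and symmetrically for $\neg R$. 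Without the first hypothesis no attribute-based procedure can return $z$, so I would state it as a standing assumption rather than try to derive it.

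With these hypotheses, the core argument is an entropy bound. \emph{Step 1:} for every attribute $a$ we have $0 \le \text{IG}(a) \le H(\mathcal{C})$. This holds because $\text{IG}(a)$ equals $H(\mathcal{C})$ minus a convex combination of entropies $H(\mathcal{C}_{a=v}) \ge 0$; equivalently, $\text{IG}(a)$ is the mutual information between the class label and $a$, which is bounded by the label entropy. \emph{Step 2:} for $a = z$, each block $\mathcal{C}_{z=v}$ is label-pure by the separation hypothesis, so $H(\mathcal{C}_{z=v}) = 0$ for all $v$. Hence $\text{IG}(z) = H(\mathcal{C})$, which attains the upper bound. \emph{Step 3:} conclude that $z \in \arg\max_{a \in \mathcal{A}_E} \text{IG}(a)$. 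Since a contradiction signal means both clusters are nonempty, $H(\mathcal{C}) > 0$, so the maximum is strictly informative and not a degenerate zero.

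The main obstacle is uniqueness, i.e.\ the claim that $a^*$ ``corresponds to'' $z$ when other attributes might tie. I would handle it by characterizing the argmax: $\text{IG}(a) = H(\mathcal{C})$ holds iff every $H(\mathcal{C}_{a=v})$ is zero, iff $a$ also partitions $\mathcal{C}$ into label-pure blocks. Any such $a$ therefore satisfies exactly the defining property of a context variable in Principle~\ref{thm:context-collapse}, namely that $R$ and $\neg R$ are each conditionally true given its values. So every maximizer is a valid hidden context variable, and the correspondence holds up to this equivalence. Ties of this kind are the indistinguishable case, in which $a$ and $z$ induce refinements of the same label partition on the stored data.

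Finally, I would add a robustness remark. If separation is only approximate, for instance because of noisy observations weighted by $\kappa$, then Step~2 becomes $\text{IG}(z) \ge H(\mathcal{C}) - \epsilon$. The identification then holds whenever every non-separating attribute has information gain below that margin. I would present this as the quantitative form of the theorem rather than attempt more.
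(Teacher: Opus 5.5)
Your argument is correct and is essentially the paper's own: a perfectly separating $z$ makes every block $\mathcal{C}_{z=v}$ label-pure, so $\text{IG}(z)=H(\mathcal{C})$, which is the largest value any attribute can attain. Your version is more careful than the paper's in three respects. You state the hypotheses the paper leaves implicit, namely that $z$ actually lies in $\mathcal{A}_E$ and separates the clusters; the paper only calls $a^*$ the ``best empirical approximation'' of $z$. You prove the bound $\text{IG}(a)\le H(\mathcal{C})$ that the paper's ``thus maximizes'' uses silently. And you characterize ties among maximizers, whereas the paper instead appeals to conjunctions of attributes found by recursive partitioning when no single attribute separates.
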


\begin{proof}
By the Context-Collapse Principle, there exists a variable $z$ such that $R|_{z=z_1}$
and $\neg R|_{z=z_2}$ are both consistent. A variable $z$ that perfectly separates
the clusters achieves $H(\mathcal{C}_{z=z_1}) = H(\mathcal{C}_{z=z_2}) = 0$ and
thus maximizes $\text{IG}(z) = H(\mathcal{C})$. In practice, the maximizing
attribute $a^*$ is the best empirical approximation of $z$ available from the
stored attributes. If no single attribute achieves perfect separation, the hidden
context may be a conjunction of attributes, discoverable by recursive partitioning
(i.e., decision tree construction over the attribute space). $\square$
\end{proof}

\begin{example}[Hidden Variable Discovery in Practice]
Forty support tickets involve Windows update KB5034763. Twenty produce
print failures, twenty produce print successes. Both clusters reach high
confidence via Noisy-OR. The contradiction signal fires.

Examining N-ary attributes, the attribute \texttt{driver\_version} achieves
perfect separation: all failures have version~6.1, all successes have version~8.3.
The system splits the original contradictory relationship
into two context-specific, high-confidence relationships:
\begin{align*}
e_1&: \{KB5034763, \text{PrintFailure}\} \text{ with } \texttt{driver\_version} = 6.1, \quad \kappa = 0.997 \\
e_2&: \{KB5034763, \text{PrintSuccess}\} \text{ with } \texttt{driver\_version} = 8.3, \quad \kappa = 0.991
\end{align*}
The contradiction dissolves into two context-specific, high-confidence relationships.
Empirical validation of this mechanism across diverse datasets is a direction for future work.
\end{example}

The critical enabler is Pillar~1. Because N-ary relationships carry rich
attributes intrinsically, the explanatory variables are always co-located with
the conflict. In systems lacking P1---binary graphs, traditional relational
tables---the hidden variable may exist in the data
somewhere, but it is not attached to the relationship itself, making automated
discovery structurally impossible without ad hoc joins or external analysis.

\begin{remark}[Computational Cost]
Hidden variable discovery over $N$ observations with $m$ attributes requires
$O(N \cdot m)$ computation per contradiction signal---a single pass through
the attribute matrix.
\end{remark}
\subsection{Common Sense Reasoning}
\label{sec:commonsense}

Common sense reasoning---handling defaults with exceptions---is notoriously difficult. ``Birds can fly'' must accommodate penguins, injured birds, caged birds, etc.  In classical logic, ``All birds fly'' fails on the first penguin; the list of qualifications is infinite.

\begin{definition}[Contextual Relationship]
Instead of $\text{Bird}(x) \to \text{Flies}(x)$, represent:
$$R_{fly}: \{x, \text{flight}\} \text{ with attributes } \{\text{species}, \text{health}, \text{environment}\}$$
\end{definition}

When a penguin is encountered, we don't delete the flight relationship; we observe that its \texttt{species} attribute value ($Spheniscidae$) deactivates the default.

\subsubsection{Default Logic via Uncertainty}

\begin{definition}[Default Reasoning]
\begin{itemize}
    \item Generic rule: $R_{default}$: Bird $\to$ Flies with $\kappa = 0.95$
    \item Specific fact: $R_{specific}$: Tweety $\to$ Cannot\_Fly with $\kappa = 1.0$
\end{itemize}
The system performs a specificity check: higher-confidence specific facts override lower-confidence defaults.
\end{definition}

\subsubsection{Inhibitor Links}

\begin{definition}[Causal Inhibition]
Relationship $R_{inhibit}$ can \emph{suppress} relationship $R_{target}$:
$$R_{broken\_wing} \xrightarrow{\text{inhibits}} R_{can\_fly}$$
\end{definition}

This models the \emph{why}: not just that Tweety can't fly, but that the broken wing is currently disabling flight-ability. When the wing heals (temporal change), the inhibitor expires and flight is automatically restored.

\needspace{10\baselineskip}
\begin{theorem}[Qualification Completeness]
\label{thm:qualification}
Any common sense exception to a default relationship $R$ can be represented as one or more of:
\begin{enumerate}
    \item An N-ary attribute that modifies the relationship (\pillar{1})
    \item A temporal boundary that limits validity (\pillar{2})
    \item An uncertainty reduction that lowers confidence (\pillar{3})
    \item A causal inhibitor that suppresses the default (\pillar{4})
\end{enumerate}
\end{theorem}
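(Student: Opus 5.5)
The plan is to prove Theorem~\ref{thm:qualification} by a case analysis on how an exception interacts with the default $R$. Any exception is a condition $c$ under which the conclusion of $R$ should fail, or be weakened, for some instance. I will classify $c$ by two questions. Is $c$ a \emph{property of the situation in which $R$ is asserted}, or an \emph{occurrence} (something that happens)? And is its effect on $R$ \emph{categorical} or \emph{graded}? The goal is to show that the resulting cells are covered by the four representational devices listed in the statement. Because every device is an ATCH construct (Definition~\ref{def:atch}), coverage follows directly.

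The steps, in order, are these.

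\textbf{Step 1: define the exception.} I first make the notion of exception precise. An exception to $R$ is a pair $(x, c)$ such that the instance $R(x)$, evaluated under condition $c$, has a truth status that differs from the default. By Theorem~\ref{thm:equivalence}, $R$ and each instance $R(x)$ are first-class hyperedges carrying attributes, a valid-time interval, a confidence, and membership in $\prec$. The truth status of an instance is therefore determined by the tuple $(\mathcal{A}_E(R(x)), \tau_v(R(x)), \kappa(R(x)), \text{incoming } \prec\text{-links})$.

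\textbf{Step 2: reduce to the four components.} The key reduction is that any change of truth status must change at least one component of this tuple. If none changed, the two situations would be indistinguishable in ATCH, and $c$ would not be an exception. This step relies on the expressive completeness requirement that the representation preserve structure, temporal scope, certainty, and causal connections. Those four preservation clauses correspond one-to-one with the four items of the theorem.

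\textbf{Step 3: map each component to an item.} I then check each component:
\begin{itemize}
\item a change in attributes, such as species or environment, is item 1;
\item a change in the valid interval, such as caging for a period, is item 2;
\item a graded change in confidence, such as an unknown health status, is item 3;
\item a change in incoming causal links, such as a broken wing, is an inhibitor under the Causal Inhibition definition, which is item 4.
\end{itemize}
For compound exceptions, I take conjunctions. This is why the statement says ``one or more of.''

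I expect Step 2 to be the main obstacle. It effectively assumes that the four-component tuple is a complete description of a relationship's status, and this is close to circular with respect to the definition of expressive completeness. My first attempt would be to argue exhaustiveness via Definition~\ref{def:atch}: the tuple $\mathcal{H}$ has no other relationship-level components besides $\mathcal{A}_E$, $\tau_v$, $\tau_t$, $\kappa$, and $\prec$, and $\tau_t$ records belief, not truth. Any exception must therefore act through one of the remaining four. I would state explicitly that the result is completeness \emph{relative to the ATCH signature}, rather than an absolute claim about all conceivable common sense.
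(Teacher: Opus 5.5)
Your four-way case split, the mapping of each case to a pillar, and the use of conjunctions for compound exceptions all match the paper's proof. The gap is Step~2. It carries the whole argument and is, as you suspect, circular. The inference ``if no component of the tuple changed, the two situations are indistinguishable in ATCH, so $c$ is not an exception'' presupposes exactly what the theorem asserts: that every genuine exception registers somewhere in the ATCH state. Your proposed repair, completeness relative to the ATCH signature, does not prove the stated theorem. It replaces it with a near-tautology: any exception that ATCH already distinguishes is distinguished by some ATCH component.

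The paper grounds exhaustiveness outside the model instead. A relationship, read as a knowledge claim, asserts that certain entities stand in a relation during a certain period, with a certain degree of confidence, for certain reasons. An exception, being a restriction of $R$, must therefore restrict who/what, when, how certain, or why. Only afterwards are these four aspects identified with \pillar{1}--\pillar{4}. That premise is informal, but it is a claim about knowledge rather than about ATCH, so it does not presuppose the conclusion. The paper then stress-tests it on the exception types most likely to escape it:
\begin{itemize}
\item modal and deontic exceptions become Case~1 attributes partitioning modal or normative context;
\item counterfactual exceptions become a Case~4 inhibitor plus a Case~1 attribute for the hypothetical conditioning.
\end{itemize}
Some such external premise, together with that borderline-case check, is what you need to reach the unrelativized statement.

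Even on its own terms, your accounting of the signature is off. Definition~\ref{def:atch} also gives each hyperedge a member set $e \subseteq V \cup \mathcal{E}$ and gives participants vertex attributes $\mathcal{A}_V$. ``Penguins do not fly'' most naturally lives in $\mathcal{A}_V$ of the bird. So two instances can agree on your four-component tuple while only one of them is an exception. To fix this, fold participant- and vertex-level differences into item~1. The paper's Case~1 does this (``which entities or contexts $R$ applies to''), and so does its Contextual Relationship definition, which lifts species into the attributes of $R_{fly}$.

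Conversely, $\prec$ is an unlabeled order, so a new incoming $\prec$-link is merely a cause, not an inhibitor. Item~4 needs the \emph{inhibits} label of the Causal Inhibition definition, which lies outside the signature you propose to relativize to. Finally, the property/occurrence versus categorical/graded grid announced at the start is never used in Steps~1--3, and it does not separate temporal boundaries from attributes. Either drop it or actually carry it through.
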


\begin{proof}
An exception to a default relationship $R$ is a piece of knowledge asserting that $R$ does not hold under certain conditions.  We show by exhaustive case analysis that every such exception modifies $R$ along at least one of the four structural dimensions.

An exception must specify \emph{what} about $R$ fails to hold.  The structural aspects of any relationship are: (i)~its participants and their roles, (ii)~its temporal extent, (iii)~its truth value or degree of belief, and (iv)~its causal context.  We argue these are exhaustive by observing that a relationship, as a knowledge claim, asserts that certain entities stand in a certain relation during a certain period with a certain degree of confidence for certain reasons.  An exception must negate or restrict at least one of these aspects:

\textbf{Case 1:} The exception restricts \emph{which entities or contexts} $R$ applies to (e.g., ``birds fly, except penguins'').  This is a modification of the participant/context structure---an N-ary attribute that partitions the domain (\pillar{1}).

\textbf{Case 2:} The exception restricts \emph{when} $R$ holds (e.g., ``birds fly, except during molting season'').  This is a temporal boundary (\pillar{2}).

\textbf{Case 3:} The exception reduces the \emph{degree of belief} in $R$ (e.g., ``birds usually fly'' with $\kappa < 1$, accommodating exceptions probabilistically).  This is an uncertainty reduction (\pillar{3}).

\textbf{Case 4:} The exception identifies a \emph{mechanism that overrides} $R$ (e.g., ``the broken wing prevents flight'').  This is a causal inhibitor (\pillar{4}).

We verify exhaustiveness by considering exception types that might appear to fall outside Cases~1--4.  \emph{Modal exceptions} (``birds \emph{can} fly'' vs.\ ``birds \emph{should} fly'') modify the relationship's modality, which is an attribute of the relationship---a participant/context restriction (Case~1) where the attribute partitions modal contexts.  \emph{Counterfactual exceptions} (``if the bird had a broken wing, it couldn't fly'') assert a causal mechanism (broken wing prevents flight) conditional on a hypothetical state---the causal inhibitor is Case~4, and the hypothetical conditioning is an attribute (Case~1).  \emph{Deontic exceptions} (``birds are \emph{permitted} to fly except in restricted airspace'') restrict applicability by normative context, which is again a participant/context attribute (Case~1).

More generally, any exception not falling into Cases~1--4 would need to modify some aspect of $R$ other than its participants, its temporal scope, its confidence, or its causal context.  But these four aspects\textemdash \emph{who/what}, \emph{when}, \emph{how certain}, and \emph{why/why not}\textemdash exhaust the structural dimensions of a knowledge claim.  An exception that modifies none of these aspects does not actually constrain $R$ and is therefore not an exception. $\square$
\end{proof}
\section{Computational Complexity}
\label{sec:complexity}

\subsection{Hardness Results}

\needspace{8\baselineskip}
\begin{theorem}[Intractability]
\begin{enumerate}
    \item General hypergraph pattern matching: NP-complete
    \item Unbounded causal chain queries: PSPACE-complete
    \item Optimal consolidation: NP-hard
\end{enumerate}
\end{theorem}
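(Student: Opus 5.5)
We treat the three claims separately, since each needs its own reduction. For each hardness claim we also check membership in the stated class where completeness is asserted. The formal objects are the ATCH tuple of Definition~\ref{def:atch}. The decision problems must first be pinned down. \emph{Pattern matching}: given a pattern hypergraph $Q$ (with variables for vertices and hyperedges) and an ATCH $\mathcal{H}$, decide whether there is a homomorphism mapping each pattern hyperedge to a hyperedge of $\mathcal{E}$ with matching attributes. \emph{Causal chain query}: decide whether there is a $\prec$-path satisfying temporal and confidence constraints. \emph{Consolidation}: given $\mathcal{H}$ and a bound $k$, decide whether there is an equivalent ATCH with at most $k$ hyperedges.

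\textbf{(1) NP-completeness.} Membership is immediate: guess the assignment of pattern variables to vertices and hyperedges, then verify incidence, attributes, $\tau_v$ and $\kappa$ constraints in polynomial time. For hardness, reduce from graph homomorphism, or from $k$-clique. Every binary graph is an ATCH whose hyperedges all have size $2$, with trivial attributes, $\tau_v=[0,\infty)$, $\kappa=1$ and empty $\prec$. A $k$-clique pattern then matches iff the input graph has a $k$-clique. This reduction is routine.

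\textbf{(2) PSPACE-completeness.} This is the step I expect to be the main obstacle. Plain reachability along $\prec^*$ is in NL, so an explicitly stored $\prec$ cannot yield PSPACE-hardness. The hardness must come from a \emph{succinct} or \emph{query-language-level} source, so I would fix the query language first. The plan is to take unbounded causal chain queries with regular constraints on the sequence of attribute labels and with time-travel conditions evaluated on the fly. I would then choose a model in which the relevant part of the causal graph is defined implicitly: edges are derived by rules from attributes, with inhibitor links (Section~\ref{sec:commonsense}) toggling validity. Under that model, PSPACE-hardness follows by encoding a linear-bounded automaton, or alternatively by reduction from regular-expression intersection non-emptiness. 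In the encoding, configurations correspond to polynomial-size sets of currently valid relationships, and a causal step corresponds to one transition, implemented by termination and creation of hyperedges. Membership in PSPACE follows from a Savitch-style search over configurations, each of polynomial size. If this implicit-graph formulation cannot be justified from the framework as defined, I would fall back to stating the result for the succinct query class only.

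\textbf{(3) NP-hardness of consolidation.} Reduce from Set Cover, or from Minimum Clique Cover. Each element becomes a base binary fact with identical attributes, temporal validity and confidence. Consolidating a group of facts into one hyperedge is allowed iff the group is a set in the family, which is enforced by attributes that must agree. A consolidated ATCH with $\le k$ hyperedges then exists iff a cover of size $\le k$ exists.
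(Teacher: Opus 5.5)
Your part (1) matches the paper: it too reduces from subgraph isomorphism by viewing an ordinary graph as an ATCH, and your $k$-clique instance is a special case of that. You also supply the NP-membership half, which the paper's sketch omits.

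The genuine gap is part (2). You have located it yourself but not closed it. Take the stored order $\prec$ of Definition~\ref{def:atch} to be strict, as a causal order must be. Then a causal chain consists of distinct hyperedges, so it has at most $|\mathcal{E}|$ elements and is itself a polynomial-size witness. Any condition checkable in polynomial time on a given chain therefore puts the query in NP. This covers per-link temporal filters, an as-of restriction, a threshold on $\kappa_{chain}$, and your conjunction of regular constraints. The $\kappa_{chain}$ threshold is even in P, as a nonnegative-weight shortest path under $-\log$ of the per-link factors $\kappa_{\prec}\cdot\text{Context}$. PSPACE-hardness for such queries would force $\mathrm{NP}=\mathrm{PSPACE}$, so your regular-expression-intersection fallback cannot work on stored data. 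Your main plan escapes this only by adding things the framework does not have. These are causal edges derived by rules from attributes, inhibitor links with toggling semantics, and ``configurations'' as states. That changes the statement rather than proving it. The paper's sketch has the same hole: it asserts a QBF reduction for ``reachability over configurations with temporal and confidence constraints'' without ever defining configurations. Following it would not help. The defensible result is your fallback made explicit. Define a succinct query class, e.g.\ states are polynomially many validity flags and transitions are given by causal rules. Prove PSPACE-completeness for that class, with hardness from QBF or LBA acceptance and membership via Savitch. Then state that for queries over an explicitly stored ATCH the problem lies in NP.

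In part (3) you prove hardness of a different problem. The paper's ``optimal consolidation'' selects a maximum-confidence consistent subset of hyperedges under temporal constraints. It reduces this from weighted MAX-SAT, which amounts to maximum-weight independent set in the conflict graph with $\kappa$ as the weight. Your problem instead minimizes the number of hyperedges after merging. Your reduction also needs repair on its own terms. If facts may be merged when their attribute values agree, mergeability is an equivalence relation, and the optimum is just the number of classes, computable in polynomial time. You need a non-transitive compatibility relation. One option is a set-valued attribute, with a group mergeable iff its members share a common set label; this gives Set Cover. Another is pairwise compatibility, which gives Clique Cover. Either way, you also need a precise definition of when two ATCHs are ``equivalent''.
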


\begin{proof}[Proof sketch]
(1) By reduction from subgraph isomorphism: given a pattern hypergraph $P$ and data hypergraph $D$, finding an embedding $P \hookrightarrow D$ generalizes the NP-complete subgraph isomorphism problem on ordinary graphs. (2) Unbounded causal chains require exploring paths of arbitrary length through $\prec$; the reachability problem over configurations with temporal and confidence constraints reduces from quantified Boolean formula satisfiability. (3) Optimal consolidation---selecting the maximum-confidence consistent subset of hyperedges under temporal constraints---reduces from weighted MAX-SAT. $\square$
\end{proof}

\subsection{Tractable Fragment}

\begin{theorem}[Tractable Fragment]
\label{thm:tractable}
The following query class is in PTIME:
\begin{itemize}
    \item $\alpha$-acyclic hyperedge patterns
    \item Fixed-depth causal chains ($d$ constant)
    \item Bounded temporal windows
    \item Confidence thresholds
\end{itemize}
\end{theorem}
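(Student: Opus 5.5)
The plan is to evaluate a query in this class in stages: first the pattern, then the temporal and confidence filters, then the causal chains. I will show each stage runs in time polynomial in the data hypergraph $\mathcal{H}$ (combined size $N = |V| + |\mathcal{E}| + |\prec|$ plus total hyperedge cardinality), treating the query as fixed. If we also want polynomiality in the query size, we need an output-polynomial (Yannakakis-style) bound.

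\textbf{Pattern stage.} Each pattern hyperedge $p$ of the $\alpha$-acyclic pattern becomes a relation $R_p$. Its tuples are the candidate data hyperedges $e \in \mathcal{E}$ that match $p$'s arity, roles and attribute predicates. I will fold the temporal-window and confidence-threshold conditions into these unary selections: keep $e$ only if $\tau_v(e)$ intersects the window $W$ and $\kappa(e) \ge \theta$. Each selection costs $O(N)$ per pattern edge, since both conditions are predicates on a single hyperedge.

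$\alpha$-acyclicity gives a join tree via GYO reduction. On that tree I will run Yannakakis' algorithm: a bottom-up semijoin pass, a top-down semijoin pass, and then a join. This produces all matches in time $O(|Q| \cdot (N + \mathrm{OUT}))$, which is polynomial. Hyperedges that contain other hyperedges, as allowed by $\mathcal{E} \subseteq 2^{V\cup\mathcal{E}}$, are handled by treating hyperedge identifiers as ordinary domain values. Acyclicity is a property of the pattern alone, so this does no harm.

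\textbf{Causal stage.} A depth-$d$ causal chain constraint $e_0 \prec e_1 \prec \cdots \prec e_d$ is a path pattern of fixed length. It is a chain of binary atoms over $\prec$, so it is itself acyclic and can be attached as an extra branch of the join tree. Alternatively, it can be computed by $d$ rounds of BFS over $\prec$ restricted to the already-filtered hyperedges. Either way the cost is $O(d \cdot |\prec|)$ per source, which is $O(d\,N^2)$ overall. The chain confidence of Theorem~\ref{thm:uncertainty} is a product of at most $d$ factors, so it is computed alongside the traversal. A threshold on it is a constant-time check per chain.

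Because $d$ is constant, the number of distinct chains per source is at most $N^d$. This is polynomial, and that is exactly where the contrast with the PSPACE-hardness of unbounded chains lies.

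\textbf{Main obstacle.} I expect the hard step to be showing that the features do not interact so as to break acyclicity or boundedness. Temporal constraints \emph{between} pattern edges, such as ``$e_1$ precedes $e_2$,'' are binary predicates and could add cycles to the join graph. I will handle this in one of two ways: restrict to predicates that lie along join-tree edges, or note that under a bounded window the time points are discretized to a polynomially sized set, so the predicates become finite join relations added along existing tree edges. Similarly, chain-confidence thresholds are not local, since they are products along the chain. They are still monotone along a path of fixed length $d$, so they can be checked after enumeration with only polynomial overhead. Combining the stages gives the PTIME bound.
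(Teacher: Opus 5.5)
Your proposal takes essentially the paper's route: Yannakakis evaluation on the join tree, temporal-window and confidence predicates pushed down to per-hyperedge selections (the paper's predicate-pushdown remark), fixed-depth causal chains as at most $d$ filtered joins along $\prec$, and closure of PTIME under composition. Your output-sensitive bound $O(|Q|\cdot(N+\mathrm{OUT}))$ is also more careful than the paper's $O(n\cdot|\mathcal{E}|)$, which ignores output size. The one step to drop is the discretization alternative in your last paragraph, because making time points finite does not make inter-edge temporal predicates local to the join tree (even a separate binary atom between two adjacent nodes that share a variable forms an $\alpha$-cycle); keep only the option of checking such predicates inside an existing tree edge's semijoin, and note that under your fixed-query reading acyclicity is not needed at all.
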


\begin{proof}[Proof sketch]
$\alpha$-acyclic hyperedge patterns, the standard acyclicity notion for database tractability~\cite{berge1989hypergraphs}, admit Yannakakis-style join-tree evaluation in $O(n \cdot |E|)$ where $n$ is the pattern size and $|E|$ is the number of hyperedges. Fixed-depth causal chains require at most $d$ joins along $\prec$, each filterable by temporal range containment (using index-supported interval queries) and confidence thresholds (simple arithmetic comparisons). Bounded temporal windows restrict the working set via range predicates on $\tau_v$, supported by standard B-tree or GiST indices. The composition of these polynomial-time operations remains polynomial. $\square$
\end{proof}

Most practical queries satisfy these constraints, making our framework efficient for real workloads.

\begin{remark}[Predicate Pushdown]
The polynomial bound in Theorem~\ref{thm:tractable} assumes that temporal and confidence predicates are evaluated \emph{before} structural joins, not after.  In practice, this is enforced by predicate pushdown: because $\tau_v$ is indexed via GiST and $\kappa$ is a scalar column, the query planner filters the working set aggressively at the leaf level of the join tree before any structural pattern matching begins.  Without pushdown, intermediate working sets during Yannakakis evaluation could grow large in highly connected hypergraphs even for $\alpha$-acyclic patterns; with pushdown, the working set at each join step is bounded by the selectivity of the temporal window and confidence threshold.
\end{remark}

\begin{remark}[Confidence-Driven Depth Termination]
The fixed-depth constraint ($d$ constant) in Theorem~\ref{thm:tractable} may appear restrictive: in root-cause analysis, users rarely know the exact causal depth in advance.  However, the uncertainty propagation rules (Theorem~\ref{thm:uncertainty}) provide a natural termination criterion.  Because joint confidence degrades multiplicatively along causal chains---$\kappa_{chain} = \kappa(e_1) \cdot \prod_{i} \kappa_{\prec}(e_i, e_{i+1})$---the accumulated confidence eventually drops below any relevance threshold $\theta$.  The traversal engine can terminate dynamically when $\kappa_{chain} < \theta$, bounding the effective depth to $d_{eff} \leq \lceil\log_{\kappa_{min}} \theta\rceil$ where $\kappa_{min}$ is the minimum per-link confidence.  This makes the depth bound data-driven rather than user-specified, while preserving the polynomial-time guarantee.
\end{remark}
\section{Prototype Implementation}
\label{sec:implementation}

To demonstrate that the theoretical framework admits a practical realization, we implemented a prototype of the ATCH model as a PostgreSQL extension written in C.  The prototype integrates with the database engine's query execution, indexing, and type system, and validates that the complexity bounds of Section~\ref{sec:complexity} are achievable using standard database infrastructure.  This section describes the prototype's external interface; a detailed systems paper with performance benchmarks is in preparation.

Concretely, the extension satisfies all four pillars---and therefore the Equivalence Theorem (Theorem~\ref{thm:equivalence})---by providing a uniform API that realizes each of the five first-class operations of Definition~\ref{def:fc}: \textsc{Store} and \textsc{Retrieve} via UUID-keyed hyperedges, \textsc{Query} via predicate-rich SQL operators, \textsc{Compose} via the \texttt{causal\_link} relation, and \textsc{Return} via first-class result sets.  The subsections below detail how each pillar is concretely realized.

\subsection{Storage Model}

The extension represents each pillar as a native database construct rather than an application-layer convention.  The logical storage model exposes hyperedges as first-class database objects:

\begin{lstlisting}
CREATE TABLE hyperedge (
  id UUID PRIMARY KEY,
  nodes TEXT[] NOT NULL,        -- P1: N-ary via arrays
  attributes JSONB,             -- P1: Flexible attributes
  valid_time TSTZRANGE NOT NULL,-- P2: Temporal validity
  confidence FLOAT              -- P3: Uncertainty
    CHECK (confidence >= 0 AND confidence <= 1)
);

CREATE TABLE causal_link (      -- P4: Relationships between
  cause_id UUID REFERENCES hyperedge(id),  -- relationships
  effect_id UUID REFERENCES hyperedge(id),
  mechanism TEXT
);
\end{lstlisting}

The mapping to the four pillars is direct: \pillar{1} is realized by the \texttt{nodes} array (arbitrary arity) and \texttt{attributes} JSONB (flexible properties); \pillar{2} by the \texttt{valid\_time} range type with native temporal containment operators; \pillar{3} by the \texttt{confidence} scalar with check constraint $\kappa \in [0,1]$; and \pillar{4} by the \texttt{causal\_link} table whose foreign keys connect hyperedges to hyperedges.  While the logical schema is expressible in standard SQL, the C extension provides optimized implementations of the core operations that go beyond what SQL-level workarounds can achieve.

\subsection{API Surface}

The extension exposes four core operations as native PostgreSQL functions:

\begin{itemize}
    \item \texttt{atch.at\_time(t)}: Time-travel queries returning the complete knowledge state as of timestamp $t$, resolving temporal validity across all hyperedges.
    \item \texttt{atch.trace\_causal\_chain(target, depth)}: Causal chain traversal from a target hyperedge, returning the full causal history with per-link confidence scores up to the specified depth.
    \item \texttt{atch.propagate\_confidence(chain)}: Computes joint confidence along a causal chain using the propagation rules of Theorem~\ref{thm:uncertainty}, including contextual modification from N-ary attributes.
    \item \texttt{atch.discover\_hidden\_context(cluster\_a, cluster\_b)}: Automated hidden variable discovery implementing Theorem~\ref{thm:hidden-var}.  Given two contradictory hyperedge clusters (identified by UUID arrays), the function computes information gain across all JSONB attributes and returns the attribute that best partitions the clusters---the hidden context variable $z$ whose identification resolves the apparent contradiction.  The core entropy computation is implemented in C for performance, requiring a single $O(N \cdot m)$ pass over $N$ observations with $m$ attributes.
\end{itemize}

These operations are accessible from standard SQL queries, allowing the ATCH framework to interoperate with existing PostgreSQL tooling, ORMs, and application code.

\subsection{Complexity Validation}

The prototype confirms that the tractable fragment identified in Theorem~\ref{thm:tractable} achieves the expected complexity bounds using standard PostgreSQL indexing infrastructure:

\begin{itemize}
    \item Point lookups: $O(1)$ via B-tree index on hyperedge identifiers
    \item Time-travel queries: $O(\log n)$ via GiST index on temporal range columns
    \item Causal chain traversal (depth $d$): $O(n \cdot d)$ with optimized graph traversal
    \item Confidence propagation: Computed inline during causal traversal with negligible overhead
\end{itemize}

The choice of PostgreSQL as the host system is deliberate: it provides ACID guarantees, mature indexing (B-tree, GiST, GIN), a robust extension API, and broad ecosystem compatibility.  The extension adds the capabilities that PostgreSQL's SQL layer cannot express natively---first-class hyperedge operations, temporal-causal queries, and confidence propagation---without requiring users to abandon their existing database infrastructure.  Detailed performance benchmarks comparing against existing systems are a direction for future work.

\section{Expressiveness Benchmarks}
\label{sec:benchmarks}

\begin{table}[ht]
\centering
\caption{Benchmark Query Expressiveness}
\small
\begin{tabular}{p{0.4cm}p{5.5cm}ccccc}
\toprule
\textbf{Q} & \textbf{Query} & \textbf{P} & \textbf{SQL} & \textbf{LPG} & \textbf{TypeDB} & \textbf{Ours} \\
\midrule
1 & 3-way meeting with attribute & 1 & $\times$ & $\times$ & \checkmark & \checkmark \\
2 & Status of X on date D & 2 & $\circ$ & $\times$ & $\times$ & \checkmark \\
3 & Relationships with $\kappa > 0.8$ & 3 & $\circ$ & $\circ$ & $\times$ & \checkmark \\
4 & What caused relationship R? & 4 & $\times$ & $\times$ & $\circ$ & \checkmark \\
5 & Propagate $\kappa$ through chain & 3,4 & $\times$ & $\times$ & $\times$ & \checkmark \\
6 & Relationships valid in interval I & 1,2 & $\circ$ & $\times$ & $\times$ & \checkmark \\
7 & Causal history at time T with $\kappa$ & All & $\times$ & $\times$ & $\times$ & \checkmark \\
\bottomrule
\end{tabular}
\end{table}

\checkmark = native, $\circ$ = workaround, $\times$ = impossible

\textbf{Q7 in our framework (native):}
\begin{lstlisting}
SELECT atch.trace_causal_chain(
  target := 'malpractice_finding',
  as_of := '2024-08-15',
  compute_confidence := true
);
-- Returns: prescription(0.73) --[0.89]--> reaction(0.95)
--          --[0.78]--> finding(0.62)
-- Chain confidence: 0.51
\end{lstlisting}

\begin{definition}[Query Expressibility]
Query $Q$ is \emph{expressible} in system $\mathcal{K}$ iff there exists a query $q$ in $\mathcal{K}$'s language such that $q(D) = Q(D)$ for all databases $D$.
\end{definition}

\begin{theorem}[Strict Expressiveness Hierarchy]
\label{thm:hierarchy}
$$\text{SQL} \subsetneq \text{LPG} \subsetneq \text{TypeDB} \subsetneq \text{ATCH}$$
\end{theorem}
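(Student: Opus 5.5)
The plan has two parts for each of the three separations. The inclusion part $\mathcal{K}_i \subseteq \mathcal{K}_{i+1}$ is shown by a query-preserving encoding. The strictness part $\mathcal{K}_{i+1} \not\subseteq \mathcal{K}_i$ is shown with a witness query from the benchmark table.

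Since query expressibility compares languages over different data models, the first step is to fix, for each adjacent pair, a canonical data translation $T$ from the lower model into the higher one. Expressibility is then read relative to $T$: $Q$ is expressible in the higher system whenever some $q$ satisfies $q(T(D)) = Q(D)$.

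For the inclusions, the encodings are as follows.
\begin{itemize}
\item SQL into LPG: each tuple becomes a node whose properties are its columns, and each foreign-key reference becomes an edge. Selection, projection and join then map to pattern matching with property predicates.
\item LPG into TypeDB: nodes become entities and each binary edge becomes a binary relation with attributes. Since TypeDB relations may participate in relations, the LPG edges themselves remain addressable.
\item TypeDB into ATCH: each relation becomes a hyperedge whose \texttt{nodes} array holds its role players, each keyed by role, with its attributes in $\mathcal{A}_E$. Every $\tau_v$ is set to $(-\infty,\infty)$, every $\kappa$ to $1$, and $\prec$ to empty. Relations that contain relations are handled by the clause $\mathcal{E} \subseteq 2^{V\cup\mathcal{E}}$ of Definition~\ref{def:atch}.
\end{itemize}
In each case I would argue, by structural induction on query syntax, that query constructs commute with $T$.

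For strictness I would use the witness queries from the benchmark table.
\begin{itemize}
\item SQL vs.\ LPG: the witness is Q4 restricted to its LPG-expressible part, namely variable-length path or reachability queries over causal edges. These are not first-order definable, so they separate core relational algebra from Cypher-style path patterns. I would invoke the locality theorem for first-order logic (Gaifman/Hanf) for this step.
\item LPG vs.\ TypeDB: the witness is Q1, a 3-ary meeting with an attribute queried as a single object. Lemma~\ref{lem:binary-fail} shows that binary edges give the attribute no attachment point. Under any reification the answer set contains a node rather than a relationship, so $q(D)\neq Q(D)$.
\item TypeDB vs.\ ATCH: the witness is Q5 or Q7, which propagate $\kappa$ through a chain as of time $T$. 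By the Equivalence Theorem (Theorem~\ref{thm:equivalence}) and the failure analysis, TypeDB lacks native \pillar{2} and \pillar{3}. Its confidence values carry no propagation semantics, and its timestamps are uninterpreted attributes.
\end{itemize}

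I expect the strictness arguments to be the main obstacle, and the first separation is the weakest point. Recursive SQL (\texttt{WITH RECURSIVE}) can express reachability, so I would have to restrict ``SQL'' to non-recursive relational algebra. The TypeDB separation has the same problem: a TypeQL rule or application code could multiply stored confidence attributes and so compute the propagation. To avoid that, I would define expressibility to require $q$ to be uniform over all databases \emph{without} a fixed user schema convention. That is the ``native vs.\ workaround'' distinction of the table, which I would first formalize as a restriction on the query language before running the witness arguments.
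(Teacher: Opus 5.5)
There is a genuine gap: you never fix a single notion of expressibility under which both the inclusions and the separations hold. The notion you do state cannot deliver strictness at the upper two links.

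If expressibility is read relative to your lower-to-higher translation $T$, a strictness witness must be evaluated on $T(D)$, and there it degenerates. Your LPG-to-TypeDB $T$ produces only binary relations, so the 3-ary meeting query returns $\emptyset$ on every $T(D)$. Your TypeDB-to-ATCH $T$ sets $\kappa\equiv 1$, $\tau_v\equiv(-\infty,\infty)$ and ${\prec}=\emptyset$, so Q5/Q7 are trivial there. Both witnesses are therefore trivially expressible in the lower system.

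Your strictness arguments in fact switch silently to reverse encodings (reification, with $\kappa$ and $\tau_v$ stored as attributes). Under those encodings the witnesses become value-expressible:
\begin{itemize}
\item a TypeQL attribute filter answers ``valid at $t$ with $\kappa>0.8$'';
\item Cypher's \texttt{reduce} over a variable-length path multiplies stored confidences, so Q5 is computable in LPG over a reified ATCH database.
\end{itemize}
Under any one value-based reading, then, either your inclusion LPG $\subseteq$ TypeDB or your third separation must fail. Similarly, your claim $q(D)\neq Q(D)$ at the second link holds only because you refuse to decode a reified node back into the relation it represents. Your inclusions perform exactly that kind of decoding freely (tuple $\leftrightarrow$ node, edge $\leftrightarrow$ binary relation).

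So every separation rests on the ``native vs.\ workaround'' criterion you postpone to the end. That criterion is the missing idea, not a preliminary. It must also be formulated so that your own inclusion encodings pass it, which is not automatic. ATCH hyperedges are bare sets in $2^{V\cup\mathcal{E}}$, so TypeDB role labels, and distinct relations over the same role players, survive only through the kind of schema convention you want to forbid.

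The first separation has a further problem. Reachability separates by a query-language feature (recursion), not a data-model feature. As you note, this forces you to replace SQL by non-recursive relational algebra, which changes the statement. For the chain to compose, the witness must also survive the upper inclusions. You would therefore need TypeQL and ATCH's query language to express unbounded reachability over the encoded edges, and you do not address this. The paper never defines an ATCH query language, so your ``structural induction on query syntax'' also has no target at the last inclusion.

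The paper avoids these problems by applying one structural criterion throughout: witnesses are inexpressible ``without structural distortion'' (reification, schema duplication, or metadata demoted to ordinary attributes). It chooses witnesses to fit that criterion:
\begin{itemize}
\item For SQL versus LPG it uses \emph{edge identity}: returning an edge as an object with its own properties that can feed a subsequent traversal step. It says explicitly that this separation ``is not about computability of result values.''
\item For LPG versus TypeDB it uses the 3-way relationship with an attribute; your argument there coincides with the paper's.
\item For TypeDB versus ATCH it uses relationships valid at $t$ with $\kappa>0.8$, which TypeDB can hold only as uninterpreted attributes without propagation semantics.
\end{itemize}
To repair your plan, drop the value-level reading and use the identity witness at the first link. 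Then make the structural criterion precise as a comparison of native constructs that your inclusion maps respect.
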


\begin{proof}
We prove each strict separation with a witness query.

\textbf{SQL $\subsetneq$ LPG:} Witness $Q_1$ = ``return the edge connecting nodes $a$ and $b$, with the edge's own properties and identity, such that the result can be used as input to a subsequent traversal step.''  In LPG, edges are first-class objects with persistent identity: \texttt{MATCH (a)-[r]->(b) RETURN r} returns $r$ as an object that carries its own properties and can participate in further pattern matches.  In SQL, the closest equivalent is a join across a junction table, but the result is a \emph{derived row}---it has no persistent identity independent of its component foreign keys, cannot carry properties that are intrinsic to the relationship (as opposed to columns that happen to be in the same row), and cannot be referenced as an argument in a subsequent join without reconstructing the key tuple.  The separation is not about computability of result \emph{values} but about whether the data model assigns structural identity to edges: SQL's relational model treats relationships as derived from tuple co-occurrence; LPG's model treats them as named, typed objects.  Containment: every SQL table embeds as a set of disconnected nodes with properties in LPG.

\textbf{LPG $\subsetneq$ TypeDB:} Witness $Q_2$ = ``3-way relationships with attribute''. LPG edges connect exactly 2 nodes; 3-way requires reification (changes semantics). TypeDB native via n-ary relations.

\textbf{TypeDB $\subsetneq$ our framework:} Witness $Q_3$ = ``relationships valid at $t$ with $\kappa > 0.8$''. TypeDB lacks temporal validity and confidence semantics. These would be data attributes, not metadata---semantically different and without propagation.

Each separation is strict: witnesses are \emph{inexpressible without structural distortion}---the target system can compute equivalent result sets only through encodings that alter the data model (reification, schema duplication, or loss of metadata semantics). $\square$
\end{proof}

Figure~\ref{fig:hierarchy} visualizes this strict containment chain with the witness queries and pillar support at each level.

\begin{table}[h]
\centering
\caption{System Comparison Summary}
\small
\begin{tabular}{lcccccc}
\toprule
System & \pillar{1} & \pillar{2} & \pillar{3} & \pillar{4} & \FC{} & Complete? \\
\midrule
SQL & $\times$ & $\circ$ & $\times$ & $\times$ & $\times$ & No \\
LPG (Neo4j) & $\times$ & $\circ$ & $\times$ & $\times$ & $\circ$ & No \\
RDF/OWL & $\circ$ & $\times$ & $\times$ & $\circ$ & $\circ$ & No \\
TypeDB & $\checkmark$ & $\times$ & $\times$ & $\circ$ & $\checkmark$ & No \\
\textbf{Ours} & $\checkmark$ & $\checkmark$ & $\checkmark$ & $\checkmark$ & $\checkmark$ & \textbf{Yes} \\
\bottomrule
\end{tabular}
\end{table}

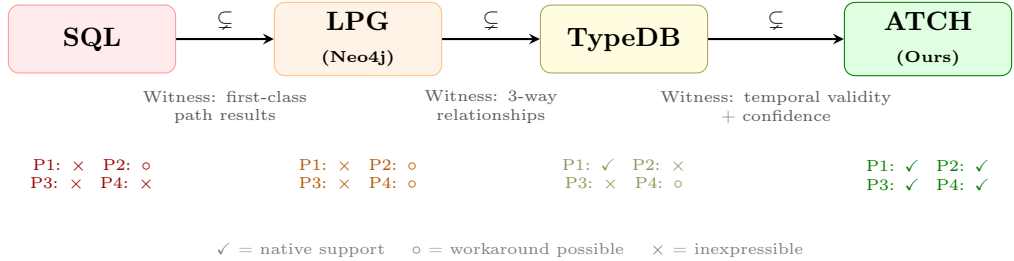
\begin{figure}[ht]
\centering
\begin{tikzpicture}[
  sys/.style={rectangle, draw, rounded corners=4pt, minimum width=2.2cm, 
              minimum height=0.9cm, align=center, font=\small\bfseries},
  arrow/.style={->, thick, >=stealth},
  witness/.style={font=\tiny, text width=3.2cm, align=center, gray!70!black}
]

\node[sys, fill=red!8, draw=red!40] (sql) at (0, 0) {SQL};
\node[sys, fill=orange!10, draw=orange!50] (neo) at (3.5, 0) {LPG\\{\tiny(Neo4j)}};
\node[sys, fill=yellow!15, draw=yellow!60!black] (type) at (7, 0) {TypeDB};
\node[sys, fill=green!12, draw=green!50!black] (ours) at (11, 0) {ATCH\\{\tiny(Ours)}};

\draw[arrow] (sql) -- (neo) node[midway, above, font=\scriptsize] {$\subsetneq$};
\draw[arrow] (neo) -- (type) node[midway, above, font=\scriptsize] {$\subsetneq$};
\draw[arrow] (type) -- (ours) node[midway, above, font=\scriptsize] {$\subsetneq$};

\node[witness] at (1.75, -0.9) {Witness: first-class\\path results};
\node[witness] at (5.25, -0.9) {Witness: 3-way\\relationships};
\node[witness] at (9.0, -0.9) {Witness: temporal validity\\+ confidence};

\node[font=\tiny, red!60!black, text width=2.2cm, align=center] at (0, -1.8) {P1: $\times$ \enspace P2: $\circ$\\P3: $\times$ \enspace P4: $\times$};
\node[font=\tiny, orange!70!black, text width=2.2cm, align=center] at (3.5, -1.8) {P1: $\times$ \enspace P2: $\circ$\\P3: $\times$ \enspace P4: $\circ$};
\node[font=\tiny, yellow!50!black, text width=2.2cm, align=center] at (7, -1.8) {P1: $\checkmark$ \enspace P2: $\times$\\P3: $\times$ \enspace P4: $\circ$};
\node[font=\tiny, green!50!black, text width=2.2cm, align=center] at (11, -1.8) {P1: $\checkmark$ \enspace P2: $\checkmark$\\P3: $\checkmark$ \enspace P4: $\checkmark$};

\node[font=\tiny, gray] at (5.5, -2.8) {$\checkmark$ = native support \quad $\circ$ = workaround possible \quad $\times$ = inexpressible};

\end{tikzpicture}
\caption{Strict expressiveness hierarchy with witness queries and pillar support. Each system is strictly less expressive than its successor. Only the ATCH framework natively supports all four pillars.}
\label{fig:hierarchy}
\end{figure}
\section{Information-Theoretic Analysis}
\label{sec:information-theory}

We quantify information loss when projecting our framework to lower-expressiveness models (Figure~\ref{fig:projection} illustrates the structural ambiguity introduced by binary projection).

\begin{definition}[Binary Projection]
$\pi_2: \mathcal{H} \to G$ maps hyperedge $\{v_1, \ldots, v_n\}$ to clique $\{(v_i, v_j) : i < j\}$.
\end{definition}

\begin{definition}[Reconstruction Ambiguity]
The \emph{reconstruction ambiguity} $\mathcal{A}(\mathcal{H}, \pi_2)$ is the minimum number of bits required to recover the original hypergraph $\mathcal{H}$ from its binary projection $\pi_2(\mathcal{H})$.
\end{definition}

\begin{theorem}[Information Loss Bound]
\label{thm:info-loss}
For hypergraph $\mathcal{H}$ with average arity $\bar{n} > 2$:
$$\mathcal{A}(\mathcal{H}, \pi_2) \geq |\mathcal{E}| \cdot \binom{\bar{n}}{2} \text{ bits}$$
\end{theorem}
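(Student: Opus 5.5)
The plan is a counting argument. I will lower-bound the size of the fiber $\pi_2^{-1}(\pi_2(\mathcal{H}))$, meaning the set of hypergraphs indistinguishable from $\mathcal{H}$ after binary projection. The minimum number of bits needed to single out $\mathcal{H}$ within that fiber is then at least $\log_2$ of the fiber size. Concretely, I will show the fiber contains at least $2^{\sum_{e\in\mathcal{E}}\binom{|e|}{2}}$ candidates. After that, the stated bound in terms of the average arity $\bar n$ follows from convexity.

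\textbf{Step 1 (per-hyperedge ambiguity).} Fix a hyperedge $e=\{v_1,\dots,v_n\}$ with $n\ge 3$. Its image under $\pi_2$ is the clique $K_e$ with $\binom{n}{2}$ pairs. Following the intuition of Lemma~\ref{lem:binary-fail} and Figure~\ref{fig:projection}, each pair $(v_i,v_j)\in K_e$ could, in a preimage, be either only a shadow of $e$ or additionally an independent binary relationship of its own. I will encode this as a subset $S\subseteq K_e$ of pairs that are declared as separate 2-edges alongside $e$. Every such choice projects to the same clique $K_e$. There are $2^{\binom{n}{2}}$ choices, so the recovery procedure must supply $\binom{n}{2}$ bits for this hyperedge alone.

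\textbf{Step 2 (independence across hyperedges).} I need the per-hyperedge choices to multiply, giving $\prod_e 2^{\binom{|e|}{2}}$ distinct preimages. This is the step I expect to be the main obstacle. When two hyperedges share two or more vertices, their cliques overlap. A pair lying in $K_e\cap K_{e'}$ then yields only one bit, not two, and the product overcounts. I would first handle this by restricting to linear hypergraphs (pairwise intersections of size at most one), where the cliques $K_e$ are edge-disjoint and the choices are clearly independent. To cover the general case, I would keep hyperedge identity as part of what must be reconstructed, i.e.\ the \textsc{Store} identifier of Definition~\ref{def:fc}. Under that convention, the membership of a pair in each $K_e$ is its own bit, because the binary image has erased which hyperedge the pair came from. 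If neither route works cleanly, I will state the theorem for linear hypergraphs, or as a worst case over such hypergraphs. I note plainly that without some such hypothesis the literal bound can fail when cliques coincide.

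\textbf{Step 3 (passing to average arity).} From Steps 1--2 the reconstruction ambiguity satisfies $\mathcal{A}(\mathcal{H},\pi_2)\ge\sum_{e\in\mathcal{E}}\binom{|e|}{2}$. The function $x\mapsto\binom{x}{2}=x(x-1)/2$ is convex on the reals. By Jensen's inequality,
\[
\sum_{e\in\mathcal{E}}\binom{|e|}{2}\;\ge\;|\mathcal{E}|\binom{\bar n}{2},\qquad \bar n=\frac{1}{|\mathcal{E}|}\sum_{e\in\mathcal{E}}|e|.
\]
This gives the claimed bound. The hypothesis $\bar n>2$ is used only to ensure the bound is nontrivial.
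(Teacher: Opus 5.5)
Your skeleton is the paper's. The paper asserts $2^{\binom{n}{2}}$ possible origins, hence $\binom{n}{2}$ bits, for each hyperedge, adds these over $\mathcal{E}$, and finishes with the same Jensen step. Your fiber count, toggling 2-edges inside $K_e$, is a cleaner version of that idea.

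The paper never addresses your Step~2; it simply sums. So your worry is a real hole in the argument, not something you missed: for $\{1,2,3\},\{1,2,4\}$ your toggles give $2^5$ preimages, not $2^6$. Your second fix, tracking hyperedge identity, does not close it. $\mathcal{A}$ measures the cost of recovering $\mathcal{H}$, not an identifier-labelled copy of it. In any case the pair-memberships are not free bits: once the vertex set of $e$ is fixed, $K_e$ is determined, so at a shared pair the fiber still offers only the one 2-edge toggle. Only the restriction to linear hypergraphs is a real option, and it changes the theorem.

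Even under that restriction there is a second gap, shared by the paper, which you did not flag. Step~1 is valid only for $|e|\ge 3$. If $|e|=2$ then $K_e=\{e\}$, and declaring $e$ a separate 2-edge alongside itself returns the same hypergraph, so the factor is $1$, not $2^{\binom{2}{2}}=2$. Step~3 nevertheless sums $\binom{|e|}{2}$ over all of $\mathcal{E}$, and $\bar n>2$ does not exclude arity-2 hyperedges. For linear $\mathcal{H}$ your argument yields only $\mathcal{A}\ge\sum_{|e|\ge3}\binom{|e|}{2}$. Jensen turns this into the stated bound only when no hyperedge has arity exactly $2$ (e.g.\ uniform arity $\ge 3$).

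This is not just a proof artifact; the statement itself fails. Take one 3-edge $\{x,y,z\}$ together with the $a^2$ edges of a complete bipartite graph $K_{a,a}$ on $2a$ further vertices. This hypergraph is linear, $\bar n=2+\frac{1}{a^2+1}>2$, and the right-hand side is $\frac{(2a^2+3)(a^2+2)}{2(a^2+1)}>a^2$. Since $K_{a,a}$ is triangle-free, every preimage must contain each of its edges as a 2-edge. So the fiber consists of the $9$ ways of covering the triangle: your $8$, plus the three 2-edges alone. If singleton hyperedges (which $\pi_2$ erases) are admitted, multiply by $2^{2a+3}$. Hence $\mathcal{A}\le 2a+7<a^2$ for $a\ge 4$.

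The 3-edge is there only to push $\bar n$ above $2$. This also shows that the hypothesis $\bar n>2$ is not about nontriviality: it is meant to keep arity-2 edges out, and an average cannot do that. A correct version of your argument must assume linearity and no arity-2 hyperedges, or else stop at $\sum_{|e|\ge3}\binom{|e|}{2}$.
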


\begin{proof}
Consider hyperedge $e$ with $n$ nodes. Under $\pi_2$, $e$ maps to a clique of $\binom{n}{2}$ binary edges. The inverse mapping $\pi_2^{-1}$ must determine: did these edges originate from (a) one n-ary hyperedge, or (b) $\binom{n}{2}$ independent binary relationships, or (c) some combination?

Without additional information, all $2^{\binom{n}{2}}$ subsets of binary edges are possible origins. The minimum description length to resolve this ambiguity is $\binom{n}{2}$ bits per hyperedge (one bit per binary edge, indicating whether it belongs to the original hyperedge).

Summing over all hyperedges with individual arities $n_1, n_2, \ldots, n_{|\mathcal{E}|}$:
$$\mathcal{A} \geq \sum_{i=1}^{|\mathcal{E}|} \binom{n_i}{2}$$
Since $\binom{n}{2} = n(n-1)/2$ is convex in $n$ for $n \geq 2$, Jensen's inequality gives:
$$\frac{1}{|\mathcal{E}|}\sum_{i=1}^{|\mathcal{E}|} \binom{n_i}{2} \geq \binom{\bar{n}}{2}$$
where $\bar{n} = \frac{1}{|\mathcal{E}|}\sum_i n_i$ is the average arity. Multiplying both sides by $|\mathcal{E}|$:
$$\mathcal{A} \geq |\mathcal{E}| \cdot \binom{\bar{n}}{2}$$
This bound is tight when hyperedges have uniform arity. $\square$
\end{proof}

Theorem~\ref{thm:info-loss} quantifies the minimum cost of using a binary system; Proposition~\ref{prop:total-gap} below extends this to systems missing arbitrary subsets of pillars.

\begin{corollary}[Irreversibility]
Binary projection is not information-preserving: in general, $\pi_2^{-1}(\pi_2(\mathcal{H})) \neq \mathcal{H}$.
\end{corollary}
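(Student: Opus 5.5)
The plan is to prove the statement by exhibiting a concrete pair of hypergraphs that $\pi_2$ cannot tell apart, rather than leaning on the quantitative bound. Theorem~\ref{thm:info-loss} does suggest the result, since $\mathcal{A}(\mathcal{H},\pi_2) \geq |\mathcal{E}|\binom{\bar n}{2} > 0$ whenever $\bar n > 2$. However, that bound is stated in terms of description length. A direct non-injectivity witness is cleaner and does not depend on how ``bits required'' is formalized. I read $\pi_2^{-1}$ as any reconstruction map $\rho$ from binary graphs back to hypergraphs. Information preservation would then mean $\rho \circ \pi_2 = \mathrm{id}$, which forces $\pi_2$ to be injective.

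The witness uses the vertex set $V = \{a,b,c\}$ and two hypergraphs on it. Let $\mathcal{H}_1$ have the single ternary hyperedge $\{a,b,c\}$. Let $\mathcal{H}_2$ have the three binary hyperedges $\{a,b\}$, $\{b,c\}$, $\{a,c\}$. By the definition of binary projection, the ternary edge maps to the clique $\{(a,b),(a,c),(b,c)\}$. Each binary edge maps to itself, so $\pi_2(\mathcal{H}_1) = \pi_2(\mathcal{H}_2)$ is the triangle on $\{a,b,c\}$. On the other hand, $\mathcal{H}_1 \neq \mathcal{H}_2$, because their hyperedge sets differ: they have different cardinalities, $1$ versus $3$. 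If the ATCH components $\mathcal{A}_E, \tau_v, \kappa, \prec$ are included, I would make them identical on corresponding data, or simply note that $\pi_2$ as defined discards them. In either case they cannot repair the collision.

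From the collision, any $\rho$ sends the common image to a single hypergraph. That hypergraph cannot equal both $\mathcal{H}_1$ and $\mathcal{H}_2$, so $\rho(\pi_2(\mathcal{H}_i)) \neq \mathcal{H}_i$ for at least one $i$, which is the claimed ``in general'' failure. I would then remark that the construction generalizes. Any hypergraph containing an edge of arity $n \geq 3$ collides with the hypergraph obtained by replacing that edge with its $\binom{n}{2}$ pairwise edges. This matches the $2^{\binom{n}{2}}$ candidate origins counted in the proof of Theorem~\ref{thm:info-loss}.

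The only step needing care is interpretive rather than technical. $\pi_2^{-1}$ is not literally a function, so the statement has to be read as the failure of every left inverse, equivalently as the non-injectivity of $\pi_2$. I would state that reading explicitly at the outset. Once it is fixed, the witness settles the claim immediately.
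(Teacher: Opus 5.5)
Your argument is correct, and it takes a more direct route than the paper. The paper gives no separate proof of this corollary. It places it as an immediate consequence of Theorem~\ref{thm:info-loss}: a strictly positive reconstruction ambiguity, $\mathcal{A}(\mathcal{H},\pi_2) \geq |\mathcal{E}|\binom{\bar n}{2} > 0$ for $\bar n > 2$, means $\pi_2(\mathcal{H})$ alone cannot determine $\mathcal{H}$. Your collision between one ternary hyperedge and its three pairwise edges is the same ambiguity that the theorem's proof invokes informally (``one $n$-ary hyperedge'' versus ``$\binom{n}{2}$ independent binary relationships''). You make it concrete and stop there.

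What your route buys is independence from the quantitative bound. You do not need the loosely justified counting in that proof, where every one of the $2^{\binom{n}{2}}$ edge subsets is a possible origin and each costs one bit per edge. You also need no formalization of ``minimum number of bits.'' Your three-vertex witness holds whether $\pi_2(\mathcal{H})$ is read as a simple graph or as a multigraph, since both sides produce each of the edges $ab$, $bc$, $ac$ exactly once. Reading $\pi_2^{-1}$ as an arbitrary left inverse, equivalently as non-injectivity of $\pi_2$, is the right way to make the statement precise. The paper's route, in exchange, says how much is lost rather than merely that something is.

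One small caution on your generalization remark. Under a multigraph reading, replacing an $n$-ary edge by its pairwise edges changes multiplicities if some of those pairs are already hyperedges, so you would need to assume they are absent. Under the simple-graph reading the remark holds as stated, and the corollary needs only your single witness anyway.
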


\begin{proposition}[Total Expressiveness Gap]
\label{prop:total-gap}
Information lost when representing ATCH $\mathcal{H}$ in a system lacking pillars $S$ is bounded by:
$$\Delta H(S) \leq \sum_{P_i \in S} H_i(\mathcal{H})$$
where $H_1$ = structural, $H_2$ = temporal, $H_3$ = confidence, $H_4$ = causal entropy.
\end{proposition}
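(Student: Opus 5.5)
The plan is to model an ATCH instance $\mathcal{H}$ as the outcome of a random variable and to decompose its total information content along the four pillars. Concretely, I write $\mathcal{H} = (X_1, X_2, X_3, X_4)$. Here $X_1$ is the hypergraph skeleton together with its attributes, $(V,\mathcal{E},\mathcal{A}_V,\mathcal{A}_E)$. $X_2$ is the pair $(\tau_v,\tau_t)$, $X_3$ is $\kappa$, and $X_4$ is $\prec$. I then set $H_i(\mathcal{H}) := H(X_i)$, which gives the structural, temporal, confidence and causal entropies. A system lacking the pillars in $S$ is modelled as a deterministic encoding map $\phi_S$. The key modelling assumption is that $\phi_S$ faithfully preserves every component $X_j$ with $P_j \notin S$: each pillar the system does support is represented without distortion, so for those components $X_j$ is a function of $\phi_S(\mathcal{H})$. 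The information lost is defined as $\Delta H(S) := H(\mathcal{H} \mid \phi_S(\mathcal{H}))$, the residual uncertainty about $\mathcal{H}$ after seeing its encoding.

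The key steps, in order, are as follows.

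\textbf{Step 1.} Because the supported components are recoverable from the encoding, conditioning on $\phi_S(\mathcal{H})$ is at least as informative as conditioning on $X_{\bar S} := (X_j)_{j\notin S}$. Hence
\[
\Delta H(S) \leq H(\mathcal{H} \mid X_{\bar S}) = H(X_S \mid X_{\bar S}).
\]

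\textbf{Step 2.} I apply the chain rule to $X_S$ in some fixed order of $S$, and then drop conditioning in each term, since conditioning never increases entropy:
\[
H(X_S \mid X_{\bar S}) = \sum_{i \in S} H\big(X_i \mid X_{\bar S}, X_{S,<i}\big) \leq \sum_{i\in S} H(X_i).
\]

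\textbf{Step 3.} I identify each $H(X_i)$ with $H_i(\mathcal{H})$, which yields the stated bound $\Delta H(S) \leq \sum_{P_i\in S} H_i(\mathcal{H})$. As a consistency check, for $S=\{P_1\}$ the upper bound should dominate the lower bound of Theorem~\ref{thm:info-loss}. This holds because the skeleton entropy of a hypergraph with $|\mathcal{E}|$ hyperedges must account for at least the $\binom{n}{2}$ membership bits per hyperedge used there, provided the hyperedge distribution is rich enough.

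The main obstacle is the modelling step, not the inequalities. There are two difficulties. First, the components are not cleanly separable: $\tau_v$, $\kappa$ and $\prec$ are indexed by $\mathcal{E}$, so they are only meaningful given $X_1$. Second, a system missing $P_1$ may also distort $X_4$, because causal links attach to hyperedges that no longer exist as objects. I would address both at once. I would define $X_2, X_3, X_4$ as labellings over a canonical enumeration of $\mathcal{E}$, so that they are well-defined random variables. I would then require explicitly that $\phi_S$ preserves exactly the unsupported-complement components. If that requirement fails, for instance when dropping $P_1$ corrupts $X_4$, the bound still holds as long as the corrupted components are included in $S$. Corollary~\ref{cor:all-or-nothing} indeed suggests taking $S$ closed upward in this sense, and I would state the proposition under that convention.
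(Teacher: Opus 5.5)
Your argument is correct under the hypotheses you make explicit, and it supplies what the paper omits. The paper states Proposition~\ref{prop:total-gap} without proof and never defines $\Delta H$ or the $H_i$. Its only justification is the subsequent remark attributing the slack to ``correlations between the entropy components.'' Fix $H_i := H(X_i)$, write $Y=\phi_S(\mathcal{H})$, set $\Delta H(S) := H(\mathcal{H}\mid Y)$, and require $X_{\bar S}$ to be a function of $Y$. The proposition is then conditioning plus subadditivity, and your derivation makes the remark exact:
\[
\sum_{i\in S}H(X_i)-\Delta H(S)=\Bigl(\sum_{i\in S}H(X_i)-H(X_S)\Bigr)+I(X_S;X_{\bar S})+I(X_S;Y\mid X_{\bar S}).
\]
The remark describes only the first term, the shared information among the lost components. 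The other two terms are separate sources of slack: the correlation of lost with retained components, and whatever the encoding's workarounds still carry about $X_S$. So equality is not automatic even when a single pillar is lost. With your canonical-enumeration encoding, $X_2$ has length $|\mathcal{E}|$, so $I(X_2;X_1)\ge H(|\mathcal{E}|)$. Your formulation also shows that the faithful-preservation hypothesis is indispensable: a constant encoding gives $\Delta H(S)=H(\mathcal{H})$, which can exceed $\sum_{i\in S}H_i$. The paper's statement leaves this hypothesis implicit.

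A few small repairs are needed.

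\textbf{Discrete value domains.} Assume finite-precision timestamps and confidences. In Definition~\ref{def:atch}, $\kappa$ and the interval endpoints are real-valued, so Shannon entropies can be infinite. Differential entropy would not rescue Step~1, because $h(X_{\bar S}\mid Y)=-\infty$ for any continuous preserved component.

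\textbf{Drop the consistency check against Theorem~\ref{thm:info-loss}.} $\mathcal{A}(\mathcal{H},\pi_2)$ is a per-instance description length, not an entropy. Under a point-mass distribution $H_1=0$, yet that theorem asserts $\mathcal{A}>0$. Moreover, $\pi_2$ destroys hyperedge identity, so by your own last paragraph it is not an encoding with $S=\{P_1\}$.

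\textbf{Define $S$ directly.} Take $S$ to be the set of components not recoverable from $Y$, rather than appealing to Corollary~\ref{cor:all-or-nothing}. Read literally, that corollary makes every nonempty admissible $S$ the full set, which collapses the proposition to $\Delta H\le H(\mathcal{H})\le\sum_{i=1}^{4}H_i$.
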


\begin{remark}
The bound is an inequality rather than an equality because the Equivalence Theorem establishes mutual entailment among the pillars, implying potential correlations between the entropy components. When pillars are lost independently (e.g., a system supports structure but not time), the additive bound holds; when multiple pillars are lost simultaneously, shared information between them may reduce the actual loss below the sum. The bound quantifies the worst-case ``price'' of using an incomplete system.
\end{remark}
\section{Discussion}
\label{sec:discussion}

\subsection{Representational Coverage Across Knowledge Types}

The four pillars define the \emph{structural dimensions} of knowledge: what is involved (P1), when it holds (P2), how certain it is (P3), and what caused it (P4). Domain-specific semantics live inside this structure as attributes. The pillars do not constrain what you store---they guarantee the container preserves everything about it.

To substantiate this claim, we systematically examine how diverse knowledge types map to the framework (Table~\ref{tab:coverage}). In each case, the four pillars provide the structural scaffolding while attributes carry domain-specific meaning.

\begin{table}[h]
\centering
\caption{Representational Coverage Across Knowledge Types}
\label{tab:coverage}
\small
\begin{tabular}{p{2.5cm}p{3.9cm}p{5.5cm}}
\toprule
\textbf{Knowledge Type} & \textbf{Example} & \textbf{Encoding} \\
\midrule
Factual & ``Tech pushed the print driver'' & Relationship with participants and attributes (P1) \\
\addlinespace
Modal & ``The server \emph{possibly} failed'' & Attribute \texttt{modality: possible}; $\kappa = 0.3$ reflects degree of possibility (P1, P3) \\
\addlinespace
Spatial & ``Server A is adjacent to B in the rack'' & \{A, B, Rack\} with \texttt{distance\_m: 0.5} (P1) \\
\addlinespace
Procedural & ``How to roll back a GPO change'' & Causal chain: Step$_1 \prec$ Step$_2 \prec$ Step$_3$, each a relationship with temporal ordering (P2, P4) \\
\addlinespace
Normative & ``Test GPO impact before pushing drivers'' & \{Tech, GPO, Driver\} with \texttt{obligation: should} and temporal validity of the policy (P1, P2) \\
\addlinespace
Counterfactual & ``If we hadn't pushed that driver\ldots'' & Relationship with \texttt{type: counterfactual} referencing the causal chain (P1, P4) \\
\addlinespace
Analogical & ``This outage is like the March one'' & Relationship between two relationships with \texttt{similarity: 0.8} (P1, P4) \\
\addlinespace
Negative & ``No malware found'' & \{Scan, Device, Malware\} with \texttt{result: absent}, $\kappa = 0.99$, bounded $\tau$ (P1, P2, P3) \\
\addlinespace
Aggregate & ``Most clients on subnet have latency'' & Higher-order relationship over a pattern with \texttt{ratio: 0.73} (P1, P4) \\
\addlinespace
Identity & ``DB-Prod-1 is Legacy-SQL-3'' & \{Name$_1$, Name$_2$\} with \texttt{relation: identity}, temporal bounds for when the identity holds (P1, P2) \\
\addlinespace
Vague & ``The server is kind of slow'' & \{Server, Performance\} with \texttt{degree: 0.6}---vagueness as a property of the knowledge, not uncertainty about it (P1) \\
\addlinespace
Temporal pattern & ``Always crashes after updates'' & Causal link Update $\prec$ Crash with \texttt{frequency: always, count: 12} (P1, P4) \\
\addlinespace
Deontic & ``Tech is authorized to restart'' & \{Tech, Server, Restart\} with \texttt{permission: authorized}, $\tau$ for validity window (P1, P2) \\
\addlinespace
Institutional & ``Counts as Severity 1'' & \{Incident, Classification\} with \texttt{rule: policy} describing why (P1, P4) \\
\addlinespace
Metacognitive & ``We don't know if backup succeeded'' & \{System, Knowledge-Gap, Backup\} with \texttt{gap: outcome\_unknown}, $\kappa = 0.5$ (P1, P3, P4) \\
\addlinespace
Indexical & ``The current primary server'' & \{Server, Role, Primary\} with $\tau$ = [start, $\infty$); ``current'' resolved by temporal query (P2) \\
\bottomrule
\end{tabular}
\end{table}

The pattern is consistent: the four pillars provide the dimensions, and attributes carry the semantics. Modal force, spatial geometry, normative weight, vagueness degree, analogy strength---these are all attributes of relationships. The pillars ensure those attributes have a place to live (P1), a time range (P2), a confidence level (P3), and connections to other knowledge (P4).

\begin{remark}[Representational vs.\ Computational Completeness]
The framework claims \emph{representational} completeness: any knowledge can be faithfully stored and retrieved without structural distortion. Domain-specific \emph{reasoning}---probabilistic inference, spatial computation, deontic logic checking---requires additional computational machinery operating over the stored representation. The framework is the storage layer, not the inference engine.
\end{remark}

\subsection{Implications for AI}

The Equivalence Theorem has direct consequences for several active areas of AI research.  The four pillars characterize what knowledge requires structurally; any AI system that stores, retrieves, or reasons about knowledge inherits these requirements whether or not it satisfies them.  Beyond the classical problems addressed in earlier sections (the Frame Problem in Section~\ref{sec:frame}, conflict resolution in Section~\ref{sec:conflict}, and common sense reasoning in Section~\ref{sec:commonsense}), the framework addresses structural limitations in modern AI systems.

\textbf{Retrieval-Augmented Generation (RAG).} Current RAG pipelines retrieve documents by semantic similarity, then feed text chunks to language models for reasoning. This architecture inherits the limitations of the underlying storage: documents are unstructured text, and similarity search cannot recover relational structure. A query may retrieve the ticket describing a printing failure without surfacing the driver change three weeks prior that set the conditions for it---because the causal link exists only implicitly across documents. The framework provides a structured alternative: instead of retrieving similar text, an AI system can query relationships directly, following causal chains (P4), filtering by temporal validity (P2), and ranking by confidence (P3). The retrieval problem shifts from ``find similar documents'' to ``find relevant relationships,'' which is a fundamentally different and more expressive operation.  Recent work on HyperGraphRAG~\cite{hypergraphrag2025} independently validates this direction, demonstrating that hypergraph-structured knowledge retrieval outperforms binary graph-based approaches across multiple domains---precisely because binary knowledge graphs cannot represent n-ary relations.

\textbf{Neuro-Symbolic Integration.} The neuro-symbolic AI community is actively seeking appropriate symbolic representation layers to complement neural perception and language capabilities. Most current approaches use knowledge graphs~\cite{hogan2021knowledge} with binary edges, which this paper proves are formally incomplete (Theorem~\ref{thm:hierarchy}). If a neuro-symbolic system's symbolic component cannot represent n-ary relationships, temporal validity, uncertainty, or causal links between relationships, it will lose information at the symbolic boundary regardless of how capable the neural component is. The Equivalence Theorem provides a formal specification for what the symbolic layer must contain: first-class relationships with the four pillars are both necessary and sufficient for structurally complete symbolic representation.

\textbf{Hypergraph Systems.} Hypergraphs have long been recognized in the research literature as the appropriate mathematical foundation for modeling complex relationships~\cite{brachman2004knowledge}. However, practical implementations remain scarce. Existing hypergraph research focuses primarily on structural properties---community detection, partitioning, spectral analysis---without addressing temporal validity, uncertainty quantification, or causal relationships between hyperedges. The ATCH framework bridges this gap by extending hypergraphs with exactly the capabilities needed for structurally complete representation. This suggests a path from theoretical hypergraph research to practical knowledge systems: augment hyperedges with temporal bounds, confidence scores, and causal ordering, and the resulting structure is provably sufficient for representing any relational knowledge.

\textbf{Persistent AI Memory.} Large language models currently lack persistent memory across conversations. Context windows are finite, and information from prior interactions is lost unless explicitly re-injected. The framework offers a formally grounded architecture for AI memory: relationships stored with temporal validity naturally represent knowledge that was true at one time but not another (P2), knowledge held with varying degrees of confidence as evidence accumulates (P3), and knowledge about what caused what across interactions (P4). While integration with language models remains an engineering challenge, the framework provides the representational guarantees that a persistent memory system requires---ensuring that stored knowledge retains its full relational structure over time.

\textbf{Explainability.} A persistent challenge in AI systems is answering ``why did this happen?'' Most current approaches to explainability attempt to reconstruct reasoning post-hoc---analyzing attention weights, computing feature attributions, or generating natural language justifications after a decision has been made. If causal relationships are stored as first-class objects, explainability becomes a query rather than a reconstruction. The causal chain $e_1 \prec e_2 \prec e_3$ with associated confidence scores provides a native audit trail: the system can directly answer ``what caused this outcome?'' by traversing $\prec$ rather than approximating an explanation from opaque internal states.

\textbf{Privacy-Preserving Knowledge Aggregation.} When knowledge systems are deployed across organizational boundaries---multiple hospitals, financial institutions, or managed service providers---cross-deployment pattern detection requires aggregating knowledge without exposing private data. The four pillars enable a clean \emph{structure/instance separation} that binary graph representations cannot achieve. A hyperedge with explicit participant roles (P1) can be projected into a \emph{structural pattern signature}---retaining the relationship type, participant types, and role structure while discarding all entity identifiers and property values---in a single well-defined operation. In a binary graph, the same relationship is distributed across multiple edges with no clean boundary between structural pattern and instance data, making anonymization heuristic rather than principled. Pillar~3 provides a second structural advantage: because every hyperedge carries a confidence value $\kappa \in [0,1]$, patterns can be characterized by \emph{aggregate confidence statistics} that summarize the evidence across multiple observations without preserving individual instances. Exporting these aggregate statistics rather than raw data makes the abstraction inherent in the representation rather than imposed post-hoc. Pillar~2 enables \emph{consent withdrawal}: because every hyperedge carries temporal bounds, a deployment that withdraws from knowledge sharing can have its contributions cleanly subtracted from aggregate confidence scores by time window, without recomputation from raw data. Pillar~4 enables sharing of \emph{causal topology}---``change type~A causes failure type~B through component type~C''---without revealing which specific entities participated, because causal links connect typed hyperedges rather than identified entities. Together, these properties suggest that the ATCH framework is not merely one possible foundation for federated knowledge systems, but the \emph{structurally necessary} one: it provides the minimal representation that supports both complete knowledge and principled privacy boundaries simultaneously.

\subsection{Limitations}

\begin{itemize}
    \item General hypergraph queries remain NP-complete; practical systems must restrict to tractable fragments (Section~\ref{sec:complexity})
    \item Domain-specific reasoning (probabilistic inference, spatial computation, deontic checking) requires computational extensions beyond the storage layer---the framework is representationally complete, not inferentially complete
    \item Complex joint probability distributions are storable as attributes but lack native propagation semantics; integration with Bayesian inference engines is complementary
    \item The Equivalence Theorem establishes that the four pillars are necessary for completeness but does not claim they are sufficient for all \emph{computational} tasks over stored knowledge
    \item Migration from existing systems requires data transformation; the cost of transformation is an engineering concern not addressed by the theoretical framework
\end{itemize}
\section{Conclusion}
\label{sec:conclusion}

We have proved that structurally complete knowledge representation requires exactly four capabilities---n-ary relationships with attributes, temporal validity, uncertainty quantification, and causal relationships---and that these four are mutually entailing: any system implementing one necessarily requires all four, and all four are equivalent to treating relationships as first-class objects.

The Equivalence Theorem establishes a strict expressiveness hierarchy $\text{SQL} \subsetneq \text{LPG} \subsetneq \text{TypeDB} \subsetneq \text{ATCH}$, with witness queries that are structurally inexpressible at each lower level.  As direct corollaries, the four pillars yield solutions to classical AI problems---the Frame Problem, conflict resolution with hidden variable discovery, uncertainty propagation through causal chains, and common sense reasoning with exceptions---not as separate mechanisms but as natural consequences of the representational structure.

Systematic examination of diverse knowledge types (modal, spatial, procedural, normative, counterfactual, analogical, and others) confirms that the four pillars provide the structural dimensions while domain-specific semantics live in attributes.  The Qualification Completeness Theorem (Theorem~\ref{thm:qualification}) further establishes that all common sense exceptions are expressible along these dimensions.

The computational cost of completeness is bounded: while general hypergraph queries are NP-complete, the tractable fragment covering practical query classes---acyclic patterns, bounded-depth causal chains, windowed temporal queries---runs in polynomial time.  A prototype PostgreSQL extension in C confirms these bounds are achievable using standard database infrastructure.

The central insight is that first-class relationships are not one feature among many but a \emph{structural threshold}: below it, information loss is unavoidable and compounds with every operation; above it, the full spectrum of knowledge representation becomes available.  More broadly, the Equivalence Theorem characterizes not what databases require but what \emph{knowledge} requires: any system that accumulates beliefs from uncertain observations over time---whether a database, an AI agent, a sensor network, or a scientific instrument---needs these four capabilities for its beliefs to be structurally faithful to the domain it models.

An accompanying Lean~4 formalization provides machine-checked proofs of the core results: the Binary Decomposition Failure lemma, all five equivalence lemmas (FC $\Rightarrow$ P1--P4 and each P$i$ $\Rightarrow$ FC), the main Equivalence Theorem, the All-or-Nothing Corollary, the Frame Problem persistence theorem (including persistence under arbitrary action sequences), and the expressiveness hierarchy.  The formalization compiles with zero errors under Lean~4 v4.14.0 and is available at \url{https://github.com/Network-Services-Group/equivalence-theorem-lean4}.

Several directions remain for future work.  First, \emph{empirical validation}: the reference implementation (Section~\ref{sec:implementation}) provides a starting point, but systematic benchmarks comparing ATCH query performance against SQL, LPG, and TypeDB workloads on real-world datasets would strengthen the practical case.  Second, \emph{domain-specific case studies}: applying ATCH to specific verticals (clinical data, supply chain, IT infrastructure) would demonstrate the framework's emergence properties and identify practical optimization opportunities.  The reference implementation and benchmark suite will be released as an open-source repository upon completion.


\section*{Acknowledgments}

The author used Claude (Anthropic) as an AI research assistant during the development of this work. AI assistance was used for mathematical formalization, literature review, and manuscript preparation. All theoretical contributions, research direction, domain expertise, proof validation, and final editorial decisions were made by the author, who takes full responsibility for the accuracy and integrity of the content.

\appendix
\section{Modal Logic Extensions}

The core framework can be extended to handle modal operators:

\begin{definition}[Modal Confidence]
Extend $\kappa$ with qualifiers:
\begin{itemize}
    \item $\kappa^\square(e)$: Confidence that $e$ is \emph{necessarily} true
    \item $\kappa^\diamond(e)$: Confidence that $e$ is \emph{possibly} true
\end{itemize}
\end{definition}

This preserves the Equivalence Theorem: modal qualifiers are attributes of the relationship, requiring first-class status.

\end{document}